\newtheorem{theorem}{Theorem}[section]
\newtheorem{lemma}[theorem]{Lemma}
\newtheorem{proposition}[theorem]{Proposition}
\newtheorem{corollary}[theorem]{Corollary}
\newtheorem{conjecture}{Conjecture}[section]
\def\Z{{\mathbb Z}}
\def\R{{\mathbb R}}
\def\om{\omega}
\def\S{\Sigma}
\def\w{{\mathbf w}}
\def\x{{\mathbf x}}
\def\y{{\mathbf y}}
\def\z{{\mathbf z}}
\def\0{{\mathbf 0}}
\def\1{{\mathbf 1}}
\def\cC{{\mathcal C}}
\def\cE{{\mathcal E}}
\def\cH{{\mathcal H}}
\def\cX{{\mathcal X}}
\def\sfh{{\mathsf h}}
\def\sfw{{\mathsf w}}
\def\oR{\overline{R}}
\def\tom{\tilde{\omega}}
\def\tm{\tilde{m}}
\def\supp{{\text{supp}}}
\def\beq{\begin{equation}}
\def\eeq{\end{equation}}
\def\beqa{\begin{eqnarray*}}
\def\eeqa{\end{eqnarray*}}
\def\disj{\stackrel{\cdot}{\cup}}
\pgfplotsset{
    standard/.style={
        axis lines=middle,
        every axis x label/.style={at={(current axis.right of origin)},anchor=north west},
        every axis y label/.style={at={(current axis.above origin)},anchor=north east}
    }
}
\title[Upper Bounds on the Size of Grain-Correcting Codes]{Upper Bounds on the Size of Grain-Correcting Codes$^*$}
\author[Navin Kashyap and Gilles Z{\'e}mor]{Navin Kashyap$^\dag$ \and Gilles Z{\'e}mor$^\ddag$}
\begin{document}
\maketitle

\renewcommand{\thefootnote}{}
\footnotetext{
$^*$Parts of this paper were submitted to the 2013 IEEE International Conference on Information Theory (ISIT 2013),
to be held in Istanbul, Turkey,  July 7--12, 2013.

$^\dag$Department of Electrical Communication Engineering, 
Indian Institute of Science, Bangalore 560012. Email: nkashyap@ece.iisc.ernet.in

\smallskip

$^\ddag$Institut de Math{\'e}matiques de Bordeaux, UMR 5251, Universit{\'e} Bordeaux 1 --- 
351, cours de la Lib{\'e}ration --- 33405 Talence Cedex, France. Email: gilles.zemor@math.u-bordeaux1.fr

}

\renewcommand{\thefootnote}{\arabic{footnote}}
\setcounter{footnote}{0}

\begin{abstract}
In this paper, we re-visit the combinatorial error model of Mazumdar et al.\ \cite{MBK11} that models errors 
in high-density magnetic recording caused by lack of knowledge of grain boundaries in the recording medium.
We present new upper bounds on the cardinality/rate of binary block codes that correct errors within this model.
\end{abstract}

\section{Introduction\label{sec:intro}} 
The combinatorial error model studied by Mazumdar et al.\ \cite{MBK11} is a highly simplified model 
of an error mechanism encountered in a magnetic recording medium 
at terabit-per-square-inch storage densites \cite{Pan_etal}, \cite{Wood_etal}. In this model, 
a one-dimensional track on a magnetic recording medium is divided into evenly spaced bit cells, 
each of which can store one bit of data. Bits are written sequentially into these bit cells. 
The sequence of bit cells has an underlying ``grain'' distribution, which may be described as follows: 
bit cells are grouped into non-overlapping blocks called \emph{grains}, which may consist of up to $b$ 
adjacent bit cells. We focus on the case $b=2$, so that a grain can contain at most two bit cells.
We define the \emph{length} of a grain to be the number of bit cells it contains. 

Each grain can store only one bit of information, i.e., all the bit cells within a grain carry the same bit value (0 or 1), 
which we call the \emph{polarity} of the grain. 
We assume, following \cite{MBK11}, that in the sequential write process, the first bit to be written into a grain 
sets the polarity of the grain, so that all the bit cells within this grain must retain this polarity.\footnote{Considering 
the physics of the write process, it would make more sense to assume that the \emph{last} bit to be 
written within a grain sets the polarity of the grain, 
thus overwriting all other bits previously stored in the bit cells comprising the grain. However, mathematically, 
this is completely equivalent to the polarity-set-by-first-bit model.} This implies 
that any subsequent attempts at writing bits within this grain make no difference to the value actually stored in 
the bit cells in the grain. If the grain boundaries were known to the write head (encoder) and the read head (decoder),
then the maximum storage capacity of one bit per grain can be achieved. However, in a more realistic scenario 
where the underlying grain distribution is fixed but \emph{unknown}, the lack of knowledge of grain boundaries
reduces the storage capacity. Constructions and rate/cardinality bounds for codes that correct errors caused by 
a fixed but unknown underlying grain distribution have been studied in the prior literature \cite{MBK11}, \cite{SR11}, \cite{SR13}. In this paper, we present improved rate/cardinality upper bounds for such codes.

The paper is organized as follows. After providing the necessary definitions and notation in Section~\ref{sec:defs},
we derive, in Section~\ref{sec:Mnt_bnd}, an upper bound on the cardinality of $t$-grain-correcting codes, for $t = 1,2,3$, using the fractional covering technique from \cite{KK12}. We conjecture that the upper bound in fact holds
for all $t$. We further conjecture that the same technique should yield a stronger upper bound, and we report some 
progress towards this in Section~\ref{sec:Mnt_bnd2}. The fractional covering technique is also used in Section~\ref{sec:Rtau_upbnd1} to obtain an upper bound on the maximum rate asymptotically achievable by codes correcting a constant fraction of grain errors. An information-theoretic upper bound on the same quantity is derived 
in Section~\ref{sec:Rtau_upbnd2}. We conclude in Section~\ref{sec:conclusion} with some remarks concerning 
the bounds. The current state-of-the-art on upper bounds on the maximum rate asymptotically achievable, including the bounds derived in this paper, is summarized in Figures~\ref{fig:upper} and \ref{fig:conj4.1}. Some of the technical proofs from Sections~\ref{sec:Mnt_bnd}  and \ref{sec:Rtau_upbnd1} are given in appendices.

\section{Definitions and Notation\label{sec:defs}}

Let $\S = \{0,1\}$, and for a positive integer $n$, let $[n]$ denote the set $\{1,2,\ldots,n\}$. A track on the
recording medium consists of $n$ bit cells indexed by the integers in $[n]$. The bit cells on the track are grouped
into non-overlapping grains of length at most two. A length-2 grain consists of bit cells with indices $j-1$ and $j$,
for some $j \in [n]$; we denote such a grain by the pair $(j-1,j)$. Let $E \subseteq \{2,\ldots,n\}$ be the set of all 
indices $j$ such that $(j-1,j)$ is a length-2 grain. Since grains cannot overlap, $E$ contains no pair of consecutive 
integers.

A binary sequence $\x = (x_1,\ldots,x_n) \in \S^n$ to be written on to the track can be affected by errors only at 
the indices $j \in E$. Indeed, what actually gets recorded on the track is the sequence $\y = (y_1,\ldots,y_n)$,
where 
\begin{equation}
y_j = 
\begin{cases}
x_{j-1}  & \text{ if } j \in E \\
x_j & \text{ otherwise. } 
\end{cases}
\label{eq:xy}
\end{equation}
For example, if $\x = (000101011100010)$ and $E = \{2,4,7,9,14\}$, then $\y = (000001111100000)$. 
Note that the set $E$ completely specifies the positions and locations of all the grains 
(both length-1 and length-2) in the track. We will call this set the 
\emph{grain pattern}. It is assumed that the grain pattern is unknown to both the write head and the read head.
The effect of the grain pattern $E$ on a binary sequence $\x \in \S^n$ defines an operator 
$\phi_E: \S^n \to \S^n$, where $\y = \phi_E(\x)$ is as specified by (\ref{eq:xy}) above.

For integers $n \ge 1$ and $t \ge 0$, let $\cE_{n,t}$ denote the set of all grain patterns $E$ with $|E| \le t$. In other
words, $\cE_{n,t}$ consists of all subsets $E \subseteq \{2,\ldots,n\}$ of cardinality at most $t$, 
such that $E$ contains no pair of consecutive integers. For an $\x \in \S^n$, we define
$$
\Phi_t(\x) = \{\phi_E(\x): E \in \cE_{n,t}\}.
$$
Thus, $\Phi_t(\x)$ is the set of all possible sequences that can be obtained from $\x$ by the action of some 
grain pattern $E$ with $|E| \le t$.
Two sequences $\x_1$ and $\x_2$ are \emph{$t$-confusable} if $\Phi_t(\x_1) \cap \Phi_t(\x_2) \neq \emptyset$. 
A binary code $\cC$ of length $n$ is said to correct $t$ grain errors, or be a \emph{$t$-grain-correcting code},
if no two distinct vectors $\x_1,\x_2 \in \cC$ are $t$-confusable. Let $M(n,t)$ denote the maximum cardinality of 
a $t$-grain-correcting code of length $n$. Also, for $\tau \in [0,\frac12]$, the maximum asymptotic rate of 
a $\lceil{\tau n}\rceil$-grain-correcting code is defined to be 
\beq
R(\tau) = \limsup_{n\to\infty} \frac{1}{n} \, \log_2 M(n,\lceil \tau n \rceil).
\label{eq:Rtau}
\eeq

A grain pattern $E$ changes a sequence $\x$ to a different sequence $\y$ iff $x_{j-1} \ne x_j$ for some $j \in E$,
i.e., the length-2 grain $(j-1,j)$ straddles the boundary between two successive runs in $\x$. 
Here, a \emph{run} is a maximal 
substring of consecutive identical symbols ($0$s or $1$s) in $\x$. A run consisting of $0$s (resp.\ $1$s) is called
a \emph{$0$-run} (resp.\ \emph{$1$-run}). The number of distinct runs in $\x$ is denoted by $r(\x)$. 

A convenient means of keeping track of run boundaries in $\x$ is via its \emph{derivative sequence}, $\x'$: 
for $\x = (x_1,\ldots,x_n)$, the sequence $\x' = (x'_2,\ldots,x'_{n})$
is defined by $x'_j = x_{j-1} \oplus x_j$, $j = 2,\ldots,n$, where $\oplus$ denotes modulo-2 addition. 
The $1$s in $\x'$ identify the boundaries between successive runs in $\x$. Thus, $\omega(\x') = r(\x) - 1$,
where $\omega(\cdot)$ denotes the Hamming weight of a binary sequence. 

Let $\supp(\x') = \{j: x'_j = 1\}$ denote the support of $\x'$. 
For $\x \in \S^n$, the sequences $\y \in \Phi_t(\x)$ are in one-to-one correspondence with the 
different ways of selecting at most $t$ non-consecutive integers\footnote{A sequence or set of 
non-consecutive integers is one that does not contain a pair of consecutive integers.} 
from $\supp(\x')$ to form a grain pattern $E \in \cE_{n,t}$. 
Thus, $|\Phi_t(\x)|$ counts the number of ways of forming such grain patterns. This count can be obtained as 
follows. Let $\ell_1, \ell_2, \ldots, \ell_m$ denote the lengths of the distinct $1$-runs in $\x'$, and define the set 
\beq
T = \bigl\{(t_1,\ldots,t_m) \in \Z_+^m: \sum_{j=1}^m t_j \le t\bigr\},
\label{eq:T}
\eeq
where $\Z_+$ denotes the set of non-negative integers. 
In the above expression, $t_j$ represents the number of integers from the support of the $j$th 1-run that are to be 
included in a grain pattern $E$ being formed.  The number of distinct ways in which $t_j$ non-consecutive integers can 
be chosen from the $\ell_j$ consecutive integers forming the support of the $j$th $1$-run is, by an elementary 
counting argument, equal to $\binom{\ell_j - t_j + 1}{t_j}$. Thus,
\begin{equation}
|\Phi_t(\x)| = \sum_{(t_1,\ldots,t_m) \in T} \prod_{j=1}^m \binom{\ell_j - t_j + 1}{t_j}.
\label{eq:Phi_t}
\end{equation}
Simplified expressions can be obtained for small values of $t$. 

\begin{proposition}
For $\x \in \S^n$, let $\omega = \omega(\x')$ denote the Hamming weight of the derivative sequence $\x'$. 
Also, let $m$ be the number of 1-runs in $\x'$. 
\begin{itemize}
\item[(a)] $|\Phi_1(\x)| = 1 + \omega = r(\x)$. 
\item[(b)] $|\Phi_2(\x)| = 1 + m + \binom{\omega}{2}$. 
\item[(c)] $|\Phi_3(\x)| = 1 + m_1 + m(\omega-3) + \binom{\omega}{3} - \binom{\omega}{2} + 2\omega$,
where $m_1$ denotes the number of 1-runs of length 1 in $\x'$.
\end{itemize}
\label{prop:small_t}
\end{proposition}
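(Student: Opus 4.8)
The plan is to derive each of the three formulas by specializing the general expression~(\ref{eq:Phi_t}) to $t=1,2,3$ and carrying out the combinatorial sum over the truncated simplex $T$. Throughout, write $\ell_1,\dots,\ell_m$ for the lengths of the $1$-runs of $\x'$, so that $\sum_j \ell_j = \omega$, and recall the basic identity $\binom{\ell-s+1}{s}$ for the number of ways of choosing $s$ non-consecutive positions among $\ell$ consecutive ones. The key elementary facts I will repeatedly invoke are $\binom{\ell+1}{1}=\ell$ (wait, $\binom{\ell-1+1}{1}=\binom{\ell}{1}=\ell$), $\binom{\ell}{2}=\binom{\ell-2+1}{2}$ giving $\binom{\ell-1}{2}$ for the $s=2$ term, and $\binom{\ell-2}{3}$ for the $s=3$ term; plus the symmetric-function identities $\sum_j \ell_j=\omega$, $\sum_{i<j}\ell_i\ell_j=\binom{\omega}{2}-\sum_j\binom{\ell_j}{2}$, and their degree-3 analogues.

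For part~(a): with $t=1$, the set $T$ consists of the all-zero tuple together with the $m$ unit vectors $\e_j$. The all-zero tuple contributes the empty product, namely $1$; the tuple $\e_j$ contributes $\binom{\ell_j-1+1}{1}=\ell_j$. Hence $|\Phi_1(\x)| = 1 + \sum_{j=1}^m \ell_j = 1+\omega$, and since $\omega(\x')=r(\x)-1$ this equals $r(\x)$.

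For part~(b): with $t=2$, $T$ decomposes into: the zero tuple (contributing $1$); the tuples with a single $1$ (each $\e_j$ contributes $\ell_j$, total $\sum_j\ell_j=\omega$); wait, that's wrong — I need to recount, because the claimed answer has $m$, not $\omega$, as the linear term. Let me reconsider: for $t=2$ a tuple can have one coordinate equal to $1$, one coordinate equal to $2$, or two distinct coordinates each equal to $1$. A coordinate equal to $2$ in position $j$ contributes $\binom{\ell_j-2+1}{2}=\binom{\ell_j-1}{2}$; a coordinate equal to $1$ in position $j$ contributes $\ell_j$; two coordinates equal to $1$ in positions $i<j$ contribute $\ell_i\ell_j$. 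So $|\Phi_2(\x)| = 1 + \sum_j \ell_j + \sum_j\binom{\ell_j-1}{2} + \sum_{i<j}\ell_i\ell_j$. Now I use $\sum_{i<j}\ell_i\ell_j = \binom{\omega}{2} - \sum_j\binom{\ell_j}{2}$ and the identity $\binom{\ell_j-1}{2} = \binom{\ell_j}{2} - (\ell_j-1)$, so the $\binom{\ell_j}{2}$ terms cancel and we are left with $1 + \sum_j\ell_j - \sum_j(\ell_j-1) + \binom{\omega}{2} = 1 + \sum_j 1 + \binom{\omega}{2} = 1 + m + \binom{\omega}{2}$, as claimed.

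For part~(c): with $t=3$, I enumerate all tuples with coordinate-sum at most $3$ and group them by ``shape'': the zero tuple; a single $1$; a single $2$; a single $3$; a $1$ and a $2$ in distinct positions; and three $1$s in distinct positions. The corresponding contributions are $1$; $\sum_j\ell_j$; $\sum_j\binom{\ell_j-1}{2}$; $\sum_j\binom{\ell_j-2}{3}$; $\sum_{i\ne j}\ell_i\binom{\ell_j-1}{2}$; and $\sum_{i<j<k}\ell_i\ell_j\ell_k$. I then substitute $\sum_j\ell_j=\omega$, expand $\binom{\ell-1}{2}$ and $\binom{\ell-2}{3}$ in terms of $\binom{\ell}{3},\binom{\ell}{2},\ell$, and express $\sum_{i\ne j}\ell_i\binom{\ell_j-1}{2}$ and $\sum_{i<j<k}\ell_i\ell_j\ell_k$ via the elementary symmetric identities $\sum_{i<j<k}\ell_i\ell_j\ell_k = \binom{\omega}{3} - \sum_j\binom{\ell_j}{2}(\omega-\ell_j) - \sum_j\binom{\ell_j}{3}$ (and a similar manipulation reducing $\sum_{i\ne j}\ell_i\binom{\ell_j-1}{2}$). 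The algebra is arranged so that all terms involving $\binom{\ell_j}{3}$ and $\binom{\ell_j}{2}$ cancel, leaving an expression in $\omega$, $m=\sum_j 1$, and $m_1 = \#\{j:\ell_j=1\}$; collecting terms should produce $1 + m_1 + m(\omega-3) + \binom{\omega}{3} - \binom{\omega}{2} + 2\omega$.

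The main obstacle is part~(c): unlike (a) and (b), the bookkeeping there involves three distinct families of ``mixed'' tuples and several competing cubic symmetric functions, and one must track the $\binom{\ell_j}{k}$ terms carefully to see that they all cancel and that the residual $\ell_j$-dependence collapses into the run-count $m$ and the short-run-count $m_1$ (the latter entering precisely because $\binom{\ell_j-2}{3}$ and $\binom{\ell_j-1}{2}$ behave specially when $\ell_j=1$, where e.g. $\binom{0}{3}=0$ but the ``generic'' polynomial formula would give a nonzero value). I would verify the final identity by checking it against~(\ref{eq:Phi_t}) directly on a few small configurations (e.g. $\x'$ a single run of length $\omega$, or $\omega$ isolated $1$s) before trusting the symbolic cancellation. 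The details of this computation are the kind of routine-but-delicate manipulation that the paper defers to its appendix.
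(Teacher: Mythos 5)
Your parts (a) and (b) are correct and follow essentially the paper's own route: specialize (\ref{eq:Phi_t}) and simplify; your use of $\sum_{i<j}\ell_i\ell_j=\binom{\omega}{2}-\sum_j\binom{\ell_j}{2}$ together with $\binom{\ell_j-1}{2}=\binom{\ell_j}{2}-(\ell_j-1)$ is just a mild rearrangement of the algebra the paper carries out for $t=2$, and your cubic identity $\sum_{i<j<k}\ell_i\ell_j\ell_k=\binom{\omega}{3}-\sum_j\binom{\ell_j}{2}(\omega-\ell_j)-\sum_j\binom{\ell_j}{3}$ checks out.

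The genuine gap is in part (c). Your enumeration of the tuples in $T$ omits an entire shape: two distinct coordinates each equal to $1$ (coordinate-sum $2$), which contributes $\sum_{i<j}\ell_i\ell_j$. With that term missing, your list of contributions does not sum to $|\Phi_3(\x)|$, and the discrepancy $\sum_{i<j}\ell_i\ell_j=\binom{\omega}{2}-\sum_j\binom{\ell_j}{2}$ involves $\sum_j\binom{\ell_j}{2}$, which is not a function of $\omega$, $m$, $m_1$ alone; so the deferred ``collecting terms'' step cannot in fact terminate in the stated formula starting from your decomposition. The paper sidesteps this by writing $|\Phi_3(\x)|=|\Phi_2(\x)|+(\text{sum-}3\text{ contributions})$, which automatically carries along all shapes of coordinate-sum at most $2$. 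A second, smaller imprecision: only $\binom{\ell_j-2}{3}$ misbehaves at $\ell_j=1$, where it equals $\binom{-1}{3}=0$ while the polynomial $(\ell_j-2)(\ell_j-3)(\ell_j-4)/6$ gives $-1$; the term $\binom{\ell_j-1}{2}$ agrees with its polynomial expansion for every $\ell_j\ge 1$ and is not a source of the $m_1$ correction. You have the right mechanism in spirit (the $m_1$ term comes from the failure of the polynomial extension of a binomial coefficient at $\ell_j=1$), but as written the bookkeeping in (c) would not close, and since (c) is precisely where the delicacy lies, the missing shape needs to be repaired rather than waved away.
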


\begin{proof}
(a)\ While the expression for $|\Phi_1(\x)|$ can be directly obtained from (\ref{eq:Phi_t}), it is simpler to observe that 
the set $\Phi_1(\x)$ consists of the sequence $\x$ itself, and the $\omega$ distinct sequences in the set 
$\bigl\{\phi_E(\x): E = \{j\} \text{ for some } j \in \supp(\x')\bigr\}$. 

(b)\ For $t=2$, it is easy to see that the expression in (\ref{eq:Phi_t}) simplifies to
$$
|\Phi_2(\x)| = 1 + \sum_{j=1}^m \ell_j + \sum_{j=1}^m \binom{\ell_j-1}{2} + \sum_{(i,j):i < j} \ell_i \ell_j.
$$
We then have
\beqa
|\Phi_2(\x)| 
&=& 1 + m + \sum_{j=1}^m (\ell_j-1) + \sum_{j=1}^m \binom{\ell_j-1}{2} + \sum_{(i,j):i < j} \ell_i \ell_j \\
&=& 1 + m + \sum_{j=1}^m \binom{\ell_j}{2} + \sum_{(i,j):i < j} \ell_i \ell_j \\
&=& 1 + m + \frac12 \sum_{j=1}^m (\ell_j^2 - \ell_j) + \sum_{(i,j):i < j} \ell_i \ell_j \\
&=& 1 + m + \frac12 \left[\sum_{j=1}^m \ell_j^2 + \sum_{(i,j): i \ne j} \ell_i \ell_j - \sum_{j=1}^m \ell_j\right] \\
&=& 1 + m + \frac12 \left[\biggl(\sum_{j=1}^m \ell_j\biggr)^2 - \sum_{j=1}^m \ell_j\right] \\
&=& 1 + m + \binom{\omega}{2}, 
\eeqa
the last equality being due to the fact that $\omega = \sum_{j=1}^m \ell_j$.

(c)\ For $t = 3$, the expression in (\ref{eq:Phi_t}) can be written as
$$
|\Phi_3(\x)| = |\Phi_2(\x)| + \sum_{(i,j,k): i < j< k} \ell_i \ell_j \ell_k 
+ \sum_{i=1}^m \binom{\ell_i - 1}{2}\biggl(\sum_{j: j \ne i} \ell_j\biggr) + \sum_{i=1}^m \binom{\ell_i - 2}{3}.
$$
From here on, straightforward algebraic manipulations lead to the expression given in the statement of the 
proposition. We omit the details, noting only that $m_1$ enters the picture when we write the last term above as
$$
\sum_{i=1}^m \frac{(\ell_i-2)(\ell_i-3)(\ell_i-4)}{6} + \sum_{i: \ell_i = 1} 1.
$$
The extra term $\sum_{i: \ell_i = 1} 1$, which equals $m_1$, accounts for the fact that the expansion of 
$\binom{\ell_i -2}{3}$ as $\frac{(\ell_i-2)(\ell_i-3)(\ell_i-4)}{6}$ is invalid when $\ell_i = 1$; by convention,
$\binom{a}{b} = 0$ when $a < 0$.
\end{proof}

\medskip

We will also find the following simple lower bound on $|\Phi_t(\x)|$, valid for any $t \ge 1$, to be useful. 

\begin{proposition}
For $\x \in \S^n$ and $t \ge 1$, we have
$$
|\Phi_t(\x)| \ge \sum_{j=0}^t \binom{r(\x)-j}{j}.
$$
\label{prop:Phi_lobnd}
\end{proposition}
\begin{proof} Consider the number of different ways of choosing exactly $j$ non-consecutive integers from $\supp(\x')$. This number is smallest when $\supp(\x')$ consists of consecutive integers, e.g., $\supp(\x) = [r(\x)-1]$. The number of different ways of choosing exactly $j$ non-consecutive integers from $[r(\x)-1]$ is, by an elementary counting argument, equal to $\binom{r(\x) - j}{j}$. 
\end{proof}

\section{An Upper Bound on $M(n,t)$\label{sec:Mnt_bnd}}

In this section, we explore the applicability to grain-correcting codes of a technique used by Kulkarni and
Kiyavash \cite{KK12} to derive upper bounds on the cardinalities of deletion-correcting codes. 

A \emph{hypergraph} $\cH$ is a pair $(V,\cX)$, where $V$ is a finite set, called the \emph{vertex set}, and $\cX$
is a family of subsets of $V$. The members of $\cX$ are called \emph{hyperedges}. A \emph{matching}
of $\cH$ is a pairwise disjoint collection of hyperedges. A \emph{(vertex) covering} of $\cH$ is a subset 
$T \subseteq V$ such that $T$ meets every hyperedge of $\cH$, i.e., $T \cap X \ne \emptyset$ for all $X \in \cX$.
The \emph{matching number} $\nu(\cH)$ is the largest size of a matching of $\cH$, while the 
\emph{covering number}, $\tau(\cH)$, is the smallest size of a covering of $\cH$. 

The problems of computing the matching and covering numbers can be expressed as a 
dual pair of integer programs. This is done via the vertex-hyperedge incidence matrix, $A$, of $\cH$, which
is defined as follows. Let $v_1,v_2,\ldots,v_{|V|}$ and $X_1,X_2,\ldots,X_{|\cX|}$ be a listing of the vertices and
hyperedges, respectively, of $\cH$. Then, $A = (A_{i,j})$ is the $|V| \times |\cX|$ matrix with $0/1$ entries, with
$A_{i,j} = 1$ iff $v_i \in X_j$. It is easy to verify that 
\beq
\nu(\cH) = \max\{\1^T\z: \z \in \{0,1\}^{|\cX|},\ A \z \le \1\}  
\label{eq:nu}
\eeq
and
\beq
\tau(\cH) = \min\{\1^T\w: \w \in \{0,1\}^{|V|},\ A^T \w \ge \1\},  
\label{eq:tau}
\eeq
where $\1$ denotes an all-ones column vector. Note that the linear programming (LP) relaxations of (\ref{eq:nu}),
\beq
\nu_f(\cH) = \max\{\1^T\z: \z \ge \0,\ A \z \le \1\}, 
\label{eq:nu_f}
\eeq
and (\ref{eq:tau}),
\beq
\tau_f(\cH) = \min\{\1^T\w: \w \ge \0,\ A^T \w \ge \1\},  
\label{eq:tau_f}
\eeq
are duals of each other. By strong LP duality, we have $\nu_f(\cH) = \tau_f(\cH)$, and hence,
\beq
\nu(\cH) \le \nu_f(\cH) = \tau_f(\cH) \le \tau(\cH).
\label{duality}
\eeq

The quantities $\nu_f(\cH)$ and $\tau_f(\cH)$ are called the \emph{fractional matching number} 
and \emph{fractional covering number}, respectively, of the hypergraph $\cH$. Any non-negative vector
$\w$ such that $A^T \w \ge 1$ is called a \emph{fractional covering}\footnote{A fractional matching
is correspondingly defined, but we will have no further use for this concept.} of $\cH$. To put it in another
way, a fractional covering is a function $\sfw: V \to \R_+$ such that $\sum_{v \in X} \sfw(v) \ge 1$ for all $X \in \cX$.
The \emph{value} of a fractional covering $\sfw$ is defined to be $|\sfw| := \sum_{v \in V} \sfw(v)$. From the 
inequality $\nu(\cH) \le \tau_f(\cH)$ in (\ref{duality}), we see that $\nu(\cH) \le |\sfw|$ for any fractional covering
$\sfw$ of $\cH$. We use this to suggest an upper bound on the largest size, $M(n,t)$, of a $t$-grain-correcting
code of blocklength $n$. 

Consider the hypergraph $\cH_{n,t} = (V,\cX)$, where $V = \S^n$, and $\cX = \{\Phi_t(\x): \x \in \S^n\}$. Note that
$\nu(\cH_{n,t}) = M(n,t)$; thus, fractional coverings of $\cH_{n,t}$ yield upper bounds on $M(n,t)$. 
Bounding the size of packings in this way has been extensively used in combinatorics, see e.g. \cite{Ber79}. 
Taking inspiration from \cite{KK12}, we consider the function 
$\sfw_t: \S^n \to \R_+$, defined for $\x \in \S^n$ as 
\beq
\sfw_t(\x) = \frac{1}{|\Phi_t(\x)|}.
\label{eq:w_t}
\eeq
For $t = 1,2,3$, we can prove that $\sfw_t$ is a fractional covering of $\cH_{n,t}$, and conjecture that
this is in fact the case for all $t \ge 1$. 

\begin{conjecture}
For all positive integers $n$ and $t$, the function $\sfw_t$ defined in (\ref{eq:w_t}) is a fractional covering of 
$\cH_{n,t}$, i.e., for all $\x \in \S^n$,
\beq
\sum_{\y \in \Phi_t(\x)} \frac{1}{|\Phi_t(\y)|} \ge 1.
\label{eq:sum_y}
\eeq
Therefore, 
\beq
M(n,t) \le |\sfw_t| = \sum_{\x\in\S^n} \frac{1}{|\Phi_t(\x)|}.
\label{Mnt_upbnd1}
\eeq
\label{conj:Phi_t}
\end{conjecture}

Our proof of (\ref{eq:sum_y}) for $t =1,2,3$ relies on an understanding of the relationship between $|\Phi_t(\x)|$
and $|\Phi_t(\y)|$ for $\y \in \Phi_t(\x)$. Recall, from (\ref{eq:Phi_t}), that $|\Phi_t(\x)|$ depends only on the 
lengths of the 1-runs in $\x'$. Thus, we need to understand how the distribution of 1s changes in going from
$\x'$ to $\y'$. 

\subsection{Effect of Grains on the Derivative Sequence\label{sec:derivative}}

Recall that $1$s in $\x'$ correspond to run boundaries in $\x$. We say that a (length-2) grain \emph{acts on} 
a $1$ in $\x'$ if it straddles the corresponding run boundary in $\x$. 
We need to distinguish between two types of 1s in the derivative sequence $\x'$. A \emph{trailing $1$} is the last
$1$ in a $1$-run, while a \emph{non-trailing $1$} is any $1$ that is not a trailing $1$. Grains act on trailing $1$s
in a manner different from non-trailing $1$s. 

A segment of $\x'$ that contains a trailing $1$ is of the form $*10*$, or $*1$ in case the trailing $1$ is 
a suffix of $\x$. Up to complementation, the corresponding segment of $\x$ is of the form $*011*$ or $*01$. 
A grain acting on the trailing $1$ in $\x'$ straddles the $01$ run boundary in $\x$. In the sequence $\y$ obtained
through the action of this grain, the segment under observation becomes $*001*$ or $*00$, and the
corresponding segment of the derivative sequence $\y'$ is $*01*$ or $*0$. 

On the other hand, a non-trailing $1$ in $\x'$ belongs to a segment of the form $*11*$; the first $1$ shown
is the non-trailing $1$ under consideration. Again, up to complementation, the corresponding segment in $\x$ 
is of the form $*010*$. A grain acting on the non-trailing $1$ in $\x'$ straddles the $01$ run boundary shown in 
$\x$. This grain causes the segment being observed to become $*000*$ in $\y$, and hence $*00*$ in $\y'$.

To summarize, the action of a grain on a trailing $1$ converts a segment of the form $*10*$ or $*1$ in $\x'$
to $*01*$ or $*0$ in $\y'$, and a grain acting on a non-trailing $1$ converts a segment of the form $*11*$ in $\x'$
to $*00*$ in $\y'$. It should be clear that the bits depicted by $*$s on either side of these segments 
remain unchanged by the action of the grain. Note, in particular, that a grain acting on a $1$ in $\x'$ does not 
increase the Hamming weight of $\x'$. A grain acting on a trailing $1$ either leaves the Hamming weight of $\x'$ 
unchanged, or reduces it by $1$; in the case of a non-trailing $1$, the Hamming weight of $\x'$ is always
reduced by $2$. 

Finally, when dealing with a grain pattern containing $t > 1$ length-2 grains, since the grains are 
non-overlapping, the actions of individual grains can be considered independently. 
Thus, the discussion above immediately implies the following useful fact.

\begin{lemma}
For any $\y \in \Phi_t(\x)$, we have $\om(\y') \le \om(\x')$, or equivalently, $r(\y) \le r(\x)$. 
\label{lem:runs}
\end{lemma}

\subsection{Proof of (\ref{eq:sum_y}) for $t = 1,2,3$\label{sec:proof1}}

Consider $t = 1$ first. 
For any $\y \in \Phi_1(\x)$, we have $\omega(\y') \le \omega(\x')$, and hence, 
$|\Phi_1(\y)| \le |\Phi_1(\x)|$ by Proposition~\ref{prop:small_t}. Therefore, 
$$
\sum_{\y \in \Phi_1(\x)} \frac{1}{|\Phi_1(\y)|} \ge \sum_{\y \in \Phi_1(\x)} \frac{1}{|\Phi_1(\x)|} = 1,
$$
which proves $(\ref{eq:sum_y})$ for $t = 1$.

The simple argument above does not extend directly to $t \ge 2$, the reason being that it is no longer true in general
that $|\Phi_t(\y)| \le |\Phi_t(\x)|$ for $\y \in \Phi_t(\x)$. For example, consider $\x = 0100$, and note that 
$\Phi_2(\x) = \{0000,0100,0110\}$. Take $\y = 0110 \in \Phi_2(\x)$, and verify that $\Phi_2(\y) = 
\{0110,0010,0111,0011\}$. Thus, $|\Phi_2(\y)| > |\Phi_2(\x)|$.

To prove (\ref{eq:sum_y}) for $t=2,3$, we show that the sequences $\y \in \Phi_t(\x)$ that violate the inequality
$|\Phi_t(\y)| \le |\Phi_t(\x)|$ can be dealt with by suitably matching them with sequences that satisfy the inequality. 
To this end, for a fixed $\x \in \S^n$, let us define $F_t(\x) = \{\y \in \Phi_t(\x): |\Phi_t(\y)| > |\Phi_t(\x)|\}$
and $G_t(\x) = \{\y \in \Phi_t(\x): |\Phi_t(\y)| \le |\Phi_t(\x)|\}$. We will construct a one-to-one mapping 
$p:F_t(\x) \to G_t(\x)$ such that for all $\y \in F_t(\x)$, we have
\beq
\frac{1}{|\Phi_t(\y)|} + \frac{1}{|\Phi_t(p(\y))|} \ge \frac{2}{|\Phi_t(\x)|}.
\label{pairing}
\eeq
The mapping $p$ will be referred to as a \emph{pairing}. Let $P_t(\x) = p(F_t(\x))$ denote the image of $p$, 
and let $Q_t(\x) = G_t(\x) \setminus P_t(\x)$. Thus, $|P_t(\x)| = |F_t(\x)|$, 
and $\Phi_t(\x) = F_t(\x) \disj P_t(\x) \disj Q_t(\x)$. Then,
\beqa
\sum_{\y \in \Phi_t(\x)} \frac{1}{|\Phi_t(\y)|} 
 &=& \sum_{\y \in F_t(\x)} \frac{1}{|\Phi_t(\y)|} + \sum_{\y \in P_t(\x)} \frac{1}{|\Phi_t(\y)|} + \sum_{\y \in Q_t(\x)} \frac{1}{|\Phi_t(\y)|} \\
 &=& \sum_{\y \in F_t(\x)} \frac{1}{|\Phi_t(\y)|} + \sum_{\y \in F_t(\x)} \frac{1}{|\Phi_t(p(\y))|} + \sum_{\y \in Q_t(\x)} \frac{1}{|\Phi_t(\y)|} \\
 &\ge& \sum_{\y \in F_t(\x)} \frac{2}{|\Phi_t(\x)|} + \sum_{\y \in Q_t(\x)} \frac{1}{|\Phi_t(\x)|}  \\
 &=& \frac{1}{|\Phi_t(\x)|} \, (2 |F_t(\x)| + |Q_t(\x)|).
\eeqa
The last expression above is equal to $1$ since $2 |F_t(\x)| + |Q_t(\x)| 
= |F_t(\x)| + |P_t(\x)| + |Q_t(\x)| = |\Phi_t(\x)|$. Thus, the construction of a pairing satisfying (\ref{pairing}) 
is sufficient to prove (\ref{eq:sum_y}), and hence, (\ref{Mnt_upbnd1}). 
Such a pairing can indeed be constructed for $t = 2,3$, and we give a proof of this in Appendix~A.

In summary, we have obtained the following result.

\begin{theorem}
For any positive integer $n$ and $t=1,2,3$, we have
$$
M(n,t) \le \sum_{\x \in \S^n} \frac{1}{|\Phi_t(\x)|}.
$$
\label{thm:Mnt_upbnd1}
\end{theorem}
For $t=1$, an exact closed-form expression can be derived for $\sum_{\x} \frac{1}{|\Phi_t(\x)|}$. Indeed,
\beqa
\sum_{\x \in \S^n} \frac{1}{|\Phi_1(\x)|} 
 &\stackrel{(a)}{=}& \sum_{\x \in \S^n} \frac{1}{|r(\x)|} 
 \ = \ \sum_{r=1}^n \sum_{\x:r(\x) = r} \frac{1}{r} \\
 &\stackrel{(b)}{=}& \sum_{r=1}^n 2\, \binom{n-1}{r-1} \, \frac{1}{r} 
 \ \stackrel{(c)}{=} \ 2\, \sum_{r=1}^n \frac{1}{n} \, \binom{n}{r} \\
&=& \frac{2}{n} (2^n-1) \ = \ \frac{1}{n}(2^{n+1}-2).
\eeqa
Equality~(a) above is by virtue of Proposition~\ref{prop:small_t}; (b) is due to the fact that the number of
$\x \in \S^n$ with $r(\x) = r$ is equal to twice the number of $\x' \in \S^{n-1}$ with $\omega(\x') = r-1$;
and (c) uses the identity $\frac{1}{r} \binom{n-1}{r-1} = \frac{1}{n} \binom{n}{r}$.
Thus, we have
\begin{corollary}
$M(n,1) \le  \frac{1}{n}(2^{n+1}-2)$ for all positive integers $n$. 
\label{cor:Mn1}
\end{corollary}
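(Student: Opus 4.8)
This is a short deduction from the $t=1$ case of Theorem~\ref{thm:Mnt_upbnd}, so the plan is simply to evaluate the resulting cardinality sum in closed form. First I would apply Theorem~\ref{thm:Mnt_upbnd} with $t=1$, reducing the claim to the identity $\sum_{\x\in\S^n}\frac{1}{|\Phi_1(\x)|}=\frac{1}{n}(2^{n+1}-2)$. By Proposition~\ref{prop:small_t}(a) we have $|\Phi_1(\x)|=r(\x)$, so the left-hand side equals $\sum_{\x\in\S^n}\frac{1}{r(\x)}$, which I would reorganize by grouping the $\x$'s according to their number of runs: $\sum_{r=1}^{n}\frac{1}{r}\,\bigl|\{\x\in\S^n:r(\x)=r\}\bigr|$.

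The next step is to count binary strings of length $n$ with exactly $r$ runs. Since $\x$ has $r$ runs precisely when $\omega(\x')=r-1$, and since $\x$ is recovered from $\x'$ by an independent choice of $x_1\in\{0,1\}$, there are exactly $2\binom{n-1}{r-1}$ such strings. Substituting this and using the elementary identity $\frac{1}{r}\binom{n-1}{r-1}=\frac{1}{n}\binom{n}{r}$, the sum becomes $\frac{2}{n}\sum_{r=1}^{n}\binom{n}{r}=\frac{2}{n}(2^n-1)=\frac{1}{n}(2^{n+1}-2)$, which completes the argument.

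There is no genuine obstacle here: every ingredient is either established earlier in the paper (Theorem~\ref{thm:Mnt_upbnd} and Proposition~\ref{prop:small_t}) or a standard binomial manipulation. The only place that calls for a moment's care is the run-counting step, and even that is routine once one recalls the correspondence between runs of $\x$ and ones of the derivative sequence $\x'$ noted in Section~\ref{sec:defs}.
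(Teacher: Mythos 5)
Your proposal is correct and follows exactly the paper's own derivation: apply Theorem~\ref{thm:Mnt_upbnd} with $t=1$, use Proposition~\ref{prop:small_t}(a) to replace $|\Phi_1(\x)|$ by $r(\x)$, count the $2\binom{n-1}{r-1}$ strings with $r$ runs via the derivative sequence, and finish with the identity $\frac{1}{r}\binom{n-1}{r-1}=\frac{1}{n}\binom{n}{r}$. Nothing further is needed.
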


For $t=2,3$, analogous closed-form expressions for the upper bound in Theorem~\ref{thm:Mnt_upbnd1} 
do not appear to exist. However, using Proposition~\ref{prop:small_t}, the bounds can be expressed in a form 
more convenient for numerical evaluation.

\begin{corollary} With the convention that $\binom{a}{-1}$ equals $1$ if $a = -1$, and equals $0$ otherwise, 
the following bounds hold:
\begin{itemize}
\item[(a)] $\displaystyle M(n,2) \le 2 \cdot \sum_{\om = 0}^{n-1} \sum_{m=0}^{\min\{\om,n-\om\}} 
     \binom{\om-1}{m-1} \binom{n-\om}{m} \frac{1}{1+m+\binom{w}{2}}$ \\
\item[(b)] $\displaystyle M(n,3) \le 2 \cdot \sum_{\om = 0}^{n-1} \sum_{m=0}^{\min\{\om,n-\om\}} \sum_{m_1 = 0}^m
     \binom{m}{m_1} \binom{\om-m-1}{m-m_1-1} \binom{n-\om}{m}\frac{1}{\phi_3(m_1,m,\om)}$, where 
$\phi_3(m_1,m,\om) = 1+m_1+m(\om-3) + \binom{\om}{3} - \binom{\om}{2} + 2\om$.
\end{itemize}
\label{cor:Mn23}
\end{corollary}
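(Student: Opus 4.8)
The plan is to deduce the corollary from Theorem~\ref{thm:Mnt_upbnd} by reorganizing the sum $\sum_{\x\in\S^n}\frac{1}{|\Phi_t(\x)|}$ according to the run statistics of the derivative sequence $\x'$. By Proposition~\ref{prop:small_t}, $|\Phi_2(\x)|$ depends on $\x$ only through the pair $(\om,m)$, where $\om=\omega(\x')$ and $m$ is the number of $1$-runs in $\x'$, and $|\Phi_3(\x)|$ depends only on the triple $(\om,m,m_1)$, where $m_1$ additionally counts the $1$-runs of length $1$ in $\x'$. So the whole problem reduces to counting, for each admissible value of these statistics, the number of $\x\in\S^n$ whose derivative has exactly those statistics, and then dividing by the appropriate value of $|\Phi_t(\x)|$.

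First I would record the two-to-one structure of the derivative map. The map $\x\mapsto\x'$ from $\S^n$ onto $\S^{n-1}$ identifies $\x$ with its bitwise complement (both have the same derivative), and conversely $\x$ is determined by $\x'$ together with the bit $x_1$; hence $\sum_{\x\in\S^n}f(\x')=2\sum_{\z\in\S^{n-1}}f(\z)$ for any $f$ that factors through $\x'$. This accounts for the leading factor $2$ in both displayed bounds, and leaves us to enumerate binary strings $\z$ of length $N:=n-1$ with prescribed weight $\om$ and prescribed number $m$ of maximal $1$-runs (and, for part~(b), prescribed number $m_1$ of length-$1$ runs among them).

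The enumeration itself is elementary. For fixed $\om$ and $m\ge 1$, such a string is built by (i) choosing an ordered composition of $\om$ into $m$ positive parts --- the run lengths --- in $\binom{\om-1}{m-1}$ ways, and (ii) distributing the $N-\om$ zeros into the $m+1$ gaps around the $1$-runs, with the $m-1$ interior gaps required to be nonempty; a stars-and-bars count gives $\binom{N-\om+1}{m}=\binom{n-\om}{m}$ ways. This proves part~(a) after dividing by $|\Phi_2(\x)|=1+m+\binom{\om}{2}$, the summation ranges $0\le\om\le n-1$ and $0\le m\le\min\{\om,n-\om\}$ being forced by $m\le\om$ and $N-\om\ge m-1$. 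For part~(b) only step (i) changes: first choose which $m_1$ of the $m$ runs have length exactly $1$, in $\binom{m}{m_1}$ ways, then choose a composition of the residual mass $\om-m_1$ into $m-m_1$ parts each at least $2$, which --- after subtracting $2$ from every part --- is a composition of $\om+m_1-2m$ into $m-m_1$ nonnegative parts, hence $\binom{\om-m-1}{m-m_1-1}$ ways. The zero-placement factor $\binom{n-\om}{m}$ is unchanged, and dividing by $\phi_3(m_1,m,\om)=|\Phi_3(\x)|$ gives part~(b).

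I do not expect a real obstacle: the content is bookkeeping. The one place that needs care is the handling of degenerate tuples, which is precisely what the stated convention about $\binom{a}{-1}$ is meant to absorb --- for instance, checking that the all-zeros string ($\om=m=0$) is counted exactly once via $\binom{-1}{-1}\binom{n}{0}=1$, and, in part~(b), that when $m_1=m$ the factor $\binom{\om-m-1}{m-m_1-1}=\binom{\om-m-1}{-1}$ correctly forces $\om=m$ and vanishes otherwise. As a sanity check on part~(a), summing $\binom{\om-1}{m-1}\binom{n-\om}{m}$ over $m$ collapses by Vandermonde to $\binom{n-1}{\om}$, which reproduces the $t=1$ computation carried out just before Corollary~\ref{cor:Mn1}.
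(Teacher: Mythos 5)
Your proposal is correct and follows essentially the same route as the paper: reorganize $\sum_{\x}1/|\Phi_t(\x)|$ by the statistics $(\om,m)$ resp.\ $(\om,m,m_1)$ of $\x'$, use the two-to-one derivative map for the factor $2$, and count the derivative sequences with prescribed statistics. The only difference is that the paper merely asserts the counts $\binom{\om-1}{m-1}\binom{n-\om}{m}$ and $\binom{m}{m_1}\binom{\om-m-1}{m-m_1-1}\binom{n-\om}{m}$, whereas you verify them (correctly) by compositions and stars-and-bars, and also check the degenerate cases covered by the $\binom{a}{-1}$ convention.
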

\begin{proof}
The expressions for the upper bounds are simply alternative ways of expressing $\sum_{\x} \frac{1}{|\Phi_t(\x)|}$ 
using Proposition~\ref{prop:small_t}. The factor 2 in the bounds arises from the fact that each
$\x' \in \S^{n-1}$ is the derivative of exactly two distinct sequences $\x \in \S^n$. In the bound for
$M(n,2)$, the term $\binom{\om-1}{m-1} \binom{n-\om}{m}$ is the number of sequences $\x' \in \S^{n-1}$
with Hamming weight $\om$ and exactly $m$ $1$-runs. Analogously, in the bound for $M(n,3)$,
the term $\binom{m}{m_1} \binom{\om-m-1}{m-m_1-1} \binom{n-\om}{m}$ is the number of sequences 
$\x' \in \S^{n-1}$ with Hamming weight $\om$ and exactly $m$ $1$-runs, of which exactly $m_1$ runs are of 
length $1$.
\end{proof}

\begin{table}[ht]
\begin{center}
\begin{tabular}{|c||c|c|c|c|c|c|c|c|c|c|c|} \hline
$\overset{\text{\normalsize $\;\;\;n$}}{\underset{\text{\normalsize $t\;\;\;$}}{}}$ & 2 & 3 & 4 & 5 & 6 & 7 & 8 & 9 & 10 & 15 & 20 \\ \hline \hline
1 & 3 (2) & 4 (4) & 7 (6) & 12 (8) & 21 (16) & 36 (26) & 63 (44) & 113 & 204 & 4368 & 104857 \\ \hline
2 &   &   & 7 (4) & 11 (8) & 17 (10) & 27 (16) & 43 (22) & 70 & 114 & 1552 & 26418 \\ \hline
3 & & & & & 17 (8) & 26 (16) & 41 (18) & 65 (32) & 101 & 1024 & 12510 \\ \hline
\end{tabular}
\end{center}
\caption{Some numerical values of the upper bound of Theorem~\ref{thm:Mnt_upbnd1}, rounded down to the nearest integer. Within parentheses are the corresponding lower bounds from Table~I of \cite{SR11}.}
\label{table:upbnd}
\end{table}

 Table~\ref{table:upbnd} lists the numerical values of the bounds in Corollaries~\ref{cor:Mn1} and \ref{cor:Mn23} 
for some small values of $n$. Two other upper bounds on $M(n,t)$ exist in the prior literature, namely Corollary~6 
of \cite{MBK11} and Theorem~3.1 of \cite{SR11}. Numerical computations for $n \le 20$ show that
our bounds above are consistently better than the bounds obtained from \cite[Theorem~3.1]{SR11}.
On the other hand, the bound of \cite[Corollary~6]{MBK11} may be better than our bound for small values of $n$:
for example, the bound in \cite{MBK11} yields $M(10,2) \le 92$. However, our bound is better for all $n$ 
sufficiently large: for $t = 1$, our bound is better for all $n \ge 8$; for $t=2$, our bound wins for $n \ge 13$.

\subsection{Some Remarks on the Proof for Arbitrary $t$} We outline here one possible approach 
to proving Conjecture~\ref{conj:Phi_t} for general $t$. To prove (\ref{eq:sum_y}), it is enough to show that
for each $\x \in \S^n$,
$$
\sum_{\y \in \Phi_t(\x)} [|\Phi_t(\y)| - |\Phi_t(\x)|] \le 0.
$$
Indeed, the above inequality is equivalent to showing that the arithmetic mean
$\frac{1}{|\Phi_t(\x)|} \, \sum_{\y \in \Phi_t(\x)} |\Phi_t(\y)|$ is at most $|\Phi_t(\x)|$. If this is true, 
then by concavity of the function $f(x) = \frac{1}{x}$, we would have
$$
\frac{1}{|\Phi_t(\x)|} \, \sum_{\y \in \Phi_t(\x)} \frac{1}{|\Phi_t(\y)|} 
\ge \frac{1}{\frac{1}{|\Phi_t(\x)|} \, {\displaystyle \sum_{\y \in \Phi_t(\x)}} |\Phi_t(\y)|} \ge \frac{1}{|\Phi_t(\x)|}
$$
which is the desired inequality (\ref{eq:sum_y}). The arguments given in Appendix~A for $t=2,3$ essentially
follow this approach.

\section{A Stronger Upper Bound on $M(n,t)$\label{sec:Mnt_bnd2}}

We in fact conjecture that a bound tighter than that of Conjecture~\ref{conj:Phi_t} may hold. To state this bound, 
let us define $V(n,t)$ to be the cardinality of a Hamming ball of radius $t$ in $\S^n$:
$$
V(n,t) = \sum_{j=0}^t \binom{n}{j}.
$$
Note that for any $\x \in \S^n$, we have $|\Phi_t(\x)| \le V(\omega(\x'),t)$, where $\omega(\x')$ is the 
Hamming weight of the derivative sequence $\x'$. This is because $V(\om(\x'),t)$ counts the number of ways 
that a pattern of up to $t$ length-$2$ grains could affect $\x$ if the grains were not constrained to be 
non-overlapping. 

We conjecture that the function $\widetilde{\sfw}_t: \S^n \to \R_+$, defined by 
\beq
\widetilde{\sfw}_t(\x) = \frac{1}{V(\omega(\x'),t)}
\label{eq:wtilde}
\eeq
is a fractional covering of the hypergraph  $\cH_{n,t}$. Note that 
$$
|\widetilde{\sfw}_t| = \sum_{\x\in\S^n} \frac{1}{V(\om(\x'),t)} = 
2 \, \sum_{\om = 0}^{n-1} \binom{n-1}{\om} \, \frac{1}{V(\om,t)},
$$
since $2\binom{n-1}{\om}$ is the number of sequences $\x \in \S^n$ whose derivative sequence $\x'$ 
has Hamming weight $\om$.

\begin{conjecture}
For all positive integers $n$ and $t$, and for all $\x \in \S^n$, we have
\beq
\sum_{\y \in \Phi_t(\x)} \frac{1}{V(\om(\y'),t)} \ge 1.
\label{eq:sum_y2}
\eeq
Therefore, 
\beq
M(n,t) \le  2 \, \sum_{\om = 0}^{n-1} \binom{n-1}{\om} \, \frac{1}{V(\om,t)}.
\label{Mnt_upbnd2}
\eeq
\label{conj:V}
\end{conjecture}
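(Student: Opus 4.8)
\medskip
\noindent\textbf{A proposed line of attack.}
The plan is to reduce (\ref{eq:sum_y2}) to a purely combinatorial inequality that depends only on the $1$-run structure of $\x'$, and then to attack that inequality by an induction on the run lengths. The first step is to upgrade the analysis of Section~\ref{sec:derivative} to an \emph{exact} formula for $\om(\y')$. List the $1$-runs of $\x'$ from left to right with lengths $\ell_1,\ldots,\ell_m$, so $\om(\x')=\sum_j\ell_j$. A sequence $\y\in\Phi_t(\x)$ is obtained by choosing, inside each run $j$, a set of $t_j$ non-consecutive positions to carry grains, with $\sum_j t_j\le t$; within run $j$ there are $\binom{\ell_j-t_j}{t_j}$ such choices that avoid the trailing $1$ and $\binom{\ell_j-t_j}{t_j-1}$ that use it, and these sum by Pascal to the $\binom{\ell_j-t_j+1}{t_j}$ of (\ref{eq:Phi_t}). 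Combining the three local rules of Section~\ref{sec:derivative} --- a grain on a non-trailing $1$ deletes a $*11*$ and costs $2$ units of weight, a grain on an interior trailing $1$ turns $*10*$ into $*01*$ and costs $0$, a grain on a trailing $1$ equal to $x'_n$ turns $*1$ into $*0$ and costs $1$ --- and noting that the actions in distinct runs do not interfere with one another's weight contribution (the ``spill'' of a trailing-$1$ grain occupies the separating $0$ and alters no other run), one obtains $\om(\y')=\sum_j\ell_j-\sum_j d_j$, where $d_j=2t_j$ if the grains in run $j$ avoid its trailing $1$, $d_j=2t_j-2$ if they use it and run $j$ is not a suffix of $\x'$, and $d_j=2t_j-1$ if they use it and the trailing $1$ of run $j$ is $x'_n$ (only the last run can be of this third kind). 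In particular the left-hand side of (\ref{eq:sum_y2}) depends on $\x$ only through the multiset $\{\ell_1,\ldots,\ell_m\}$, the value of $t$, and whether $x'_n=1$.

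A grain on a suffix trailing $1$ only lowers $\om(\y')$ by one more unit than the interior rule does while leaving all multiplicities unchanged, so the left-hand side of (\ref{eq:sum_y2}) is never smaller with $x'_n=1$ than with the same run profile and $x'_n=0$; it therefore suffices to treat profiles with no suffix run. For such an $\x$, set $\om=\sum_j\ell_j$ and re-parametrise the grain count in run $j$ as $q_j\ge0$ (the weight drop in run $j$ is $2q_j$ whether or not its trailing $1$ is used, the only difference being that one extra grain is spent when it is). Letting $B\subseteq[m]$ record the runs in which a grain is placed on the trailing $1$, the inequality (\ref{eq:sum_y2}) becomes
\[
\sum_{\substack{(q_1,\ldots,q_m)\in\Z_+^m,\ B\subseteq[m]\\ |B|+\sum_{j}q_j\le t}}
\frac{\prod_{j\in B}\binom{\ell_j-q_j-1}{q_j}\,\prod_{j\notin B}\binom{\ell_j-q_j}{q_j}}{V\!\bigl(\om-2\sum_{j}q_j,\,t\bigr)}\ \ge\ 1 .
\]
A reassuring special case: when every $\ell_j=1$ (so $m=\om$) only $q_j\equiv0$ survives, the numerator is $1$, the sum ranges over all $B\subseteq[m]$ with $|B|\le t$, and the left-hand side equals $V(m,t)/V(\om,t)=1$ exactly. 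Thus the conjectured bound is tight precisely for sequences whose derivative is a string of isolated $1$s, and the content of (\ref{eq:sum_y2}) is that such a configuration minimises the displayed sum among all run profiles of a given $\om$.

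To prove this minimality I would induct on the run lengths: show that replacing a part $\ell\ge2$ of the profile by two parts summing to $\ell$ never increases the displayed sum, and iterate until the all-ones profile is reached, where the sum is $1$. Concretely, fix all runs but one, regard the denominators $V(\om-2\sum_jq_j,t)$ as a fixed decreasing weight attached to the total drop contributed by the remaining runs, and show that the generating object of a single run of length $\ell$ dominates, in the weighted sense this requires, the convolution of the objects of two runs of lengths $\ell_1$ and $\ell-\ell_1$. After clearing denominators this reduces to a family of inequalities among sums of products of binomials of the shapes $\binom{\ell-q}{q}$ and $\binom{\ell-q-1}{q}$, weighted by reciprocals $1/V(\cdot,t)$; for each fixed $t$ these are finitely many polynomial inequalities in $\ell_1,\ell$, which I expect can be settled for $t=1,2$ (and probably $t=3$) by the kind of explicit bookkeeping used in Appendix~A.

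The hard part will be making the induction uniform in $t$. The concavity argument that proves Conjecture~\ref{conj:Phi_t} in Section~\ref{sec:Mnt_bnd} is genuinely unavailable here: the same reasoning gives only $\sum_{\y\in\Phi_t(\x)}1/V(\om(\y'),t)\ge|\Phi_t(\x)|/V(\om(\x'),t)$, and since $|\Phi_t(\x)|$ is strictly below $V(\om(\x'),t)$ whenever some $1$-run of $\x'$ has length $\ge2$, this stops short of $1$. One must instead exploit the full distribution of the weights $\om(\y')$, showing that enough $\y\in\Phi_t(\x)$ have weight well below $\om(\x')$ to absorb the deficit $V(\om(\x'),t)-|\Phi_t(\x)|$; and it is the coupling, through the single denominator $V(\om-2\sum_jq_j,t)$, between the drops contributed by different runs --- and between ``spending a grain to destroy weight'' and ``spending a grain merely to shuffle a trailing $1$'' --- that blocks a clean run-by-run or induction-on-$m$ argument. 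Identifying the right potential function, or the right Pascal-type identity generalising $V(\om-1,t)+V(\om-1,t-1)=V(\om,t)$ that governs the boundary cases, is where I expect the genuine difficulty to lie, and is presumably why the statement is left as a conjecture.
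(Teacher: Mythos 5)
Your reformulation is essentially correct: the exact weight bookkeeping (a drop of $2$ for a grain on a non-trailing $1$, $0$ for an interior trailing $1$, $1$ for a suffix trailing $1$), the identification of $\Phi_t(\x)$ with the admissible tuples $(q_1,\ldots,q_m,B)$ with the stated binomial multiplicities, the resulting rewriting of \eqref{eq:sum_y2} as a quantity depending only on the run profile of $\x'$, and the monotonicity argument that lets you discard the suffix case are all sound, and the observation that the all-ones profile yields equality is a useful sanity check (though ``precisely'' is itself an unproved strengthening). But what you have written is a program, not a proof. The entire content of \eqref{eq:sum_y2} has been pushed into the unproved claim that splitting a part $\ell\ge 2$ of the profile into two parts summing to $\ell$ never increases the sum, and you offer no argument for this domination --- not even for a single fixed small $t$ --- while correctly noting that the coupling of all runs through the common denominator $V\bigl(\om-2\sum_j q_j,t\bigr)$ is exactly what obstructs a clean run-by-run or convolution argument. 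Since the statement is left as a conjecture in the paper, the fair comparison is with Theorem~\ref{thm:Mnt_upbnd2}, where the cases $t=1,2,3$ are actually established; your route stops short even of those cases, deferring them to ``bookkeeping'' that is not carried out.

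The paper's partial proof proceeds quite differently: for $t=2,3$ it counts, via Lemmas~\ref{lem:Aj} and \ref{lem:Bj}, the number of $\y\in\Phi_t(\x)$ achieving each possible weight drop as explicit functions of $(\om,m,m_1,m_2)$, discards the favourable $m_1,m_2$ contributions, and reduces \eqref{eq:sum_y2} to one-variable inequalities handled by convexity of $g_\om$ together with a finite computer check for small $\om$; no profile-refinement induction appears. If you want to push your approach, the first genuine test is the splitting lemma for $t=2$: because the denominator couples all runs, ``fix all runs but one'' does not obviously reduce it to a two-run statement, so you would have to prove the domination uniformly over the frozen configuration, or exhibit the potential function (or Pascal-type identity for $V(\cdot,t)$) you allude to. Until some such lemma is established, the proposal contributes a correct reformulation and the equality case, but no progress on the inequality itself, and certainly not a proof of Conjecture~\ref{conj:V}.
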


Note that \eqref{Mnt_upbnd2} is tighter than \eqref{Mnt_upbnd1}, since $|\Phi_t(\x)| \le V(\omega(\x'),t)$.
For $t=1$, the two bounds are identical by virtue of Proposition~\ref{prop:small_t}(a); hence, in this case, 
Theorem~\ref{thm:Mnt_upbnd1} shows that the conjecture is true. We can also prove that the conjecture holds 
for $t=2,3$.

\begin{theorem}
For any positive integer $n$ and $t = 1,2,3$, we have
$$
M(n,t) \le 2 \, \sum_{\om = 0}^{n-1} \binom{n-1}{\om} \frac{1}{V(\om,t)}
$$
\label{thm:Mnt_upbnd2}
\end{theorem}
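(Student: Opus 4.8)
\textbf{Proof plan for Theorem~\ref{thm:Mnt_upbnd2}.}
The plan is to establish \eqref{eq:sum_y2} for $t=1,2,3$, since the bound \eqref{Mnt_upbnd2} then follows exactly as in the derivation preceding Theorem~\ref{thm:Mnt_upbnd} (fractional covering of $\cH_{n,t}$ via $\widetilde{\sfw}_t$, together with $\nu(\cH_{n,t})=M(n,t)$). For $t=1$ there is nothing to do: by Proposition~\ref{prop:small_t}(a) we have $V(\om(\x'),1)=1+\om(\x')=|\Phi_1(\x)|$, so \eqref{eq:sum_y2} coincides with \eqref{eq:sum_y}, already proved in Section~\ref{sec:Mnt_bnd}. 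So the real content is $t=2,3$, and the key quantity to control is $V(\om(\y'),t)$ as $\y$ ranges over $\Phi_t(\x)$ — equivalently, how the weight $\om(\y')$ of the derivative sequence is distributed over $\Phi_t(\x)$.

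The first step is to reduce \eqref{eq:sum_y2} to a statement purely about $\om = \om(\x')$. Since $V(\cdot,t)$ is monotone in its first argument and (by Section~\ref{sec:derivative}) every grain action is weight-non-increasing on the derivative sequence, we have $\om(\y')\le\om$ for all $\y\in\Phi_t(\x)$, hence $V(\om(\y'),t)\le V(\om,t)$ and each term $1/V(\om(\y'),t)\ge 1/V(\om,t)$. Thus it would suffice to show $|\Phi_t(\x)|\ge V(\om,t)$ — but this is false in general (indeed $|\Phi_t(\x)|\le V(\om,t)$ always, with equality only in degenerate cases), so a cruder bound will not work and we must track the weight drops carefully. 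The plan is instead to partition $\Phi_t(\x)$ according to $k := \om - \om(\y')$, count how many $\y$ give each value of $k$, and bound $\sum_{\y}1/V(\om-k,t)$ from below by a sum over $k$ of (number of $\y$ with weight-drop $k$) times $1/V(\om-k,t)$. Using the structural analysis of Section~\ref{sec:derivative} — a grain on a trailing $1$ drops the weight by $0$ or $1$, a grain on a non-trailing $1$ drops it by exactly $2$, and grains act independently — one obtains, for each fixed pattern of $r$ grains-on-trailing-$1$s contributing a drop and $s$ grains elsewhere, explicit binomial-type counts in terms of $\om$, $m$ (number of $1$-runs), and (for $t=3$) $m_1$. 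The inequality \eqref{eq:sum_y2} then becomes a finite family of polynomial inequalities in $\om, m, m_1$, valid in the appropriate ranges ($1\le m\le\om$, $0\le m_1\le m$, etc.), which can be verified by elementary estimates; this mirrors, but strengthens, the pairing argument of Appendix~A. I expect the cleanest route is to follow the ``arithmetic-mean / concavity'' strategy outlined at the end of Section~\ref{sec:Mnt_bnd}: show $\frac{1}{|\Phi_t(\x)|}\sum_{\y\in\Phi_t(\x)}V(\om(\y'),t)\le V(\om,t)$, from which \eqref{eq:sum_y2} follows by convexity of $x\mapsto 1/x$, exactly as $\eqref{eq:sum_y}$ followed for $\sfw_t$.

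The main obstacle is that $V(\om,t)$ is genuinely larger than $|\Phi_t(\x)|$, so the slack that made the $t=1$ argument trivial is gone, and we cannot afford the lazy bound $\om(\y')\le\om$ on most terms: a positive fraction of $\y\in\Phi_t(\x)$ must have a strict weight drop, and we need to show that the mean of $V(\om(\y'),t)$ over $\Phi_t(\x)$ stays below $V(\om,t)$. Concretely, the hardest case will be $t=3$ with many length-$1$ runs in $\x'$ (the regime where $m_1$ is large), since these runs are the ones on which a single grain can ``do the most damage'' to the weight count yet the $V(\om,3)$ denominator is least forgiving; one has to check that the extra term $1+m_1$ appearing in Proposition~\ref{prop:small_t}(c) is compensated by the corresponding surplus of weight-dropping sequences. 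I would handle this by splitting off the contribution of length-$1$ runs first, bounding their effect separately, and then treating the runs of length $\ge 2$ by the generic count; the residual inequality should reduce to checking $V(\om,3)\ge$ (a explicit cubic in $\om$ times a constant) on the range $\om\ge 0$, which is routine. The remaining details — the exact binomial bookkeeping and the polynomial verifications — are deferred to Appendix~A, where the analogous pairing construction for \eqref{eq:sum_y} is carried out, since the strengthening to $V(\om(\y'),t)$ requires only a quantitative sharpening of the same case analysis.
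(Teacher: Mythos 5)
Your overall plan --- partition $\Phi_t(\x)$ according to the weight drop $k=\om-\om(\y')$, count the sequences achieving each drop in terms of $\om$, $m$ and (for $t=3$) $m_1$, and reduce \eqref{eq:sum_y2} to a finite family of polynomial inequalities --- is exactly the route the paper takes (Lemmas~\ref{lem:Aj} and \ref{lem:Bj} are precisely those counts), and your handling of $t=1$ via Proposition~\ref{prop:small_t}(a) is also the paper's. However, the shortcut you single out as the ``cleanest route'' is broken. Proving $\frac{1}{|\Phi_t(\x)|}\sum_{\y\in\Phi_t(\x)}V(\om(\y'),t)\le V(\om,t)$ gives nothing: that inequality is trivially true term by term (every $\y\in\Phi_t(\x)$ satisfies $\om(\y')\le\om$), and Jensen applied to it only yields $\sum_{\y}1/V(\om(\y'),t)\ge |\Phi_t(\x)|/V(\om,t)$, which is at most $1$ by the very fact $|\Phi_t(\x)|\le V(\om,t)$ that you yourself point out. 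The cancellation that makes the arithmetic-mean/concavity argument work for $\sfw_t$ relies on the weight function being $1/|\Phi_t(\cdot)|$ itself; for $\widetilde{\sfw}_t$ the analogous Jensen step would require $\sum_{\y\in\Phi_t(\x)}V(\om(\y'),t)\le|\Phi_t(\x)|^2$, a genuinely different and nontrivial statement that you neither formulate nor prove. So the route you declare preferable would not close.

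The second gap is one of execution. You defer ``the exact binomial bookkeeping and the polynomial verifications'' to Appendix~A, but Appendix~A only contains the pairing construction for the weaker inequality \eqref{eq:sum_y}, and the strengthening to the $V(\om(\y'),t)$ weights is not a routine quantitative sharpening of that case analysis. In the paper, the $t=2$ case requires the exact values of $A_0+A_1$, $A_2+A_3$, $A_4$ from Lemma~\ref{lem:Aj} followed by a careful algebraic lower bound, and the $t=3$ case is closed only by discarding the $m_1$, $m_2$ contributions after verifying their signs, a convexity analysis of an auxiliary function $g_\om(m)$ valid for $\om\ge 13$, and a finite computer check covering $\om\le 16$. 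Your plan points at the right decomposition, but the steps you label ``routine'' are where essentially all of the work lies, and no part of that work is carried out or even reduced to a precise verifiable claim.
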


Table~\ref{table:upbnd2} lists the numerical values of the bound in the above theorem for some small values of $n$.
Again, for the sake of comparison, the corresponding lower bounds from Table~I of \cite{SR11} are given in parentheses. We do not tabulate the row for $t=1$ as this is the same as that in Table~\ref{table:upbnd}.

\begin{table}[ht]
\begin{center}
\begin{tabular}{|c||c|c|c|c|c|c|c|c|c|} \hline
$\overset{\text{\normalsize $\;\;\;n$}}{\underset{\text{\normalsize $t\;\;\;$}}{}}$ & 4 & 5 & 6 & 7 & 8 & 9 & 10 & 15 & 20 \\ \hline\hline
2 & 7  (4) & 10 (8) & 15 (10) & 24 (16) & 39 (22) & 62 & 102 & 1406 & 24306 \\ \hline
3 &  &  &  15 (8) & 23 (16) & 34 (18) & 53 (32) & 81 & 800 & 9921 \\ \hline
\end{tabular}
\end{center}
\caption{Some numerical values of the upper bound on $M(n,t)$ of Theorem~\ref{thm:Mnt_upbnd2}, 
rounded down to the nearest integer.}
\label{table:upbnd2}
\end{table}

In the remainder of this section, we give a proof of Theorem~\ref{thm:Mnt_upbnd2}.

\subsection{Proof for $t=2$}

Fix $\x \in \S^n$, and let $\om = \om(\x')$. We want to prove \eqref{eq:sum_y2} for $t=2$. From the 
discussion in Section~\ref{sec:derivative}, we know that for any $\y \in \Phi_2(\x)$, the Hamming weight of 
$\y'$ must lie between $\om-4$ and $\om$. For $j = 0,1,2,3,4$, let $A_j$ be the number of sequences 
$\y \in \Phi_2(\x)$ such that $\om(\y') = \om-j$. 

\begin{lemma}
Let $m$ denote the number of $1$-runs in $\x'$, and let $m_1$ be the number of these that are of length $1$. 
Then,
\begin{itemize}
\item[(a)] $A_0+A_1 = 1 + m + \binom{m}{2}$;
\item[(b)] $A_2+A_3 = (\om-m)(m+1) - (m-m_1)$;
\item[(c)] $A_4 = \binom{\om-m}{2}-(\om-m) + (m-m_1)$.
\end{itemize}
\label{lem:Aj}
\end{lemma}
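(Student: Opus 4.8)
The plan is to recast the claim as a counting problem over grain patterns. Fix $\x \in \S^n$ with $\om = \om(\x')$, and recall from Section~\ref{sec:defs} that the elements of $\Phi_2(\x)$ are in bijection with the subsets $E \subseteq \supp(\x')$ satisfying $|E| \le 2$ and containing no two consecutive integers; call such an $E$ \emph{admissible}. By the analysis of Section~\ref{sec:derivative}, a grain acting on a position $i \in \supp(\x')$ decreases the Hamming weight of the derivative sequence by an amount $d_i \in \{0,1,2\}$ that depends only on the type of the $1$ at position $i$ in $\x'$: $d_i = 2$ when it is a non-trailing $1$; $d_i = 0$ when it is a trailing $1$ with $i < n$; and $d_i = 1$ when $i = n$ (necessarily a trailing $1$, since $n$ is the last position of $\x'$). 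Moreover, for an admissible pair $E = \{i,j\}$ the two grains alter disjoint windows of the derivative --- a grain on $i$ changes only positions $i$ and $i+1$, or only position $n$ when $i=n$, and admissibility keeps these windows apart --- so the derivative of $\phi_E(\x)$ has Hamming weight $\om - \sum_{i\in E} d_i$. Consequently $A_j$ equals the number of admissible $E$ with $\sum_{i\in E} d_i = j$, and proving the lemma amounts to enumerating admissible sets according to the types of their elements.

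First I would note that $\x'$ has exactly $m$ trailing $1$s (one per $1$-run) and exactly $\om-m$ non-trailing $1$s. The crucial structural observation concerns admissibility: two elements of $\supp(\x')$ form a forbidden (consecutive) pair if and only if they are consecutive positions of a common $1$-run, so there are $\sum_j(\ell_j-1) = \om - m$ forbidden pairs. In any forbidden pair the smaller element is a non-trailing $1$; the larger element is a trailing $1$ precisely when the pair is the last two $1$s of a $1$-run of length at least $2$, which happens $m-m_1$ times, and it is a non-trailing $1$ in the remaining $(\om-m)-(m-m_1) = \om-2m+m_1$ cases. In particular, no pair of trailing $1$s is ever forbidden.

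Then I would assemble the three quantities by grouping the possible weight drops. A drop in $\{0,1\}$ occurs exactly when $E$ consists only of trailing $1$s; since every pair of trailing $1$s is admissible, such $E$ number $1 + m + \binom{m}{2}$ (the empty set, $m$ singletons, and $\binom{m}{2}$ pairs), which is (a). A drop in $\{2,3\}$ occurs exactly when $E$ contains one non-trailing $1$ and at most one further element, which must then be a trailing $1$; there are $\om-m$ such singletons and $(\om-m)m - (m-m_1)$ such pairs, totaling $(\om-m)(m+1) - (m-m_1)$, which is (b). A drop equal to $4$ occurs exactly when $E$ is an admissible pair of non-trailing $1$s; of the $\binom{\om-m}{2}$ pairs of non-trailing $1$s, exactly $\om-2m+m_1$ are forbidden, leaving $\binom{\om-m}{2} - (\om-2m+m_1) = \binom{\om-m}{2} - (\om-m) + (m-m_1)$, which is (c). As a sanity check, the three expressions sum to $1 + m + \binom{\om}{2}$, which equals $|\Phi_2(\x)|$ by Proposition~\ref{prop:small_t}(b).

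The part I expect to require the most care is the bookkeeping of the admissibility constraint --- in particular, classifying the forbidden pairs by the types of their endpoints, which is exactly what introduces the parameter $m_1$ into the formulas. A secondary point worth verifying explicitly is the additivity of the weight drop over the two grains of an admissible pair; this rests on the fact that a grain on position $i$ changes only positions $i,i+1$ (or $n$ alone) of the derivative, together with the window-disjointness forced by admissibility, and the boundary configurations --- two trailing $1$s in adjacent runs separated by a single $0$; a non-trailing $1$ and a trailing $1$ of the same run; a trailing $1$ at position $n$ --- should each be checked against Section~\ref{sec:derivative}. Finally, bundling the drops as $\{0,1\}$, $\{2,3\}$ and $\{4\}$ rather than computing $A_0,\dots,A_4$ individually is deliberate: it makes the dependence on whether $x'_n=1$ cancel, leaving formulas in $\om$, $m$, and $m_1$ only.
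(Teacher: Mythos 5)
Your proof is correct and follows essentially the same route as the paper's: classify admissible grain patterns $E\subseteq\supp(\x')$ by how many grains act on non-trailing $1$s, count the forbidden adjacencies (which is where $m_1$ enters), and note that trailing $1$s contribute weight drops of $0$ or $1$ while non-trailing $1$s contribute $2$. The only divergence is in part (c), which you enumerate directly (pairs of non-trailing $1$s minus the $\om-2m+m_1$ forbidden ones) whereas the paper obtains it by subtracting $A_0+\cdots+A_3$ from $|\Phi_2(\x)|$ via Proposition~\ref{prop:small_t}(b); both are equally routine, and your use of the total as a sanity check is consistent with the paper's computation.
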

\begin{proof} Let $\y = \phi_E(\x)$ for some $E \in \cE_{n,2}$. Write $\x' = (x'_2,\ldots,x'_n)$. Note that
$\x'$ contains $m$ trailing $1$s and $\om-m$ non-trailing $1$s.

(a)\ We have $\om(\y') = \om$ or $\om-1$ iff each $j \in E$ acts upon a trailing $1$ of $\x'$. 
Let $J = \{j \in \{2,\ldots,n\}: x'_j \text{ is a trailing } 1\}$ be the positions of the trailing $1$s in $\x'$.
Thus, $|J| = m$, and $J$ does not contain consecutive integers. The sequence $\y$ is counted by $A_0+A_1$
iff $E \subseteq J$. The number of such grain patterns $E$ is precisely $1 + m + \binom{m}{2}$.

(b)\ We have $\om(\y') = \om-2$ or $\om-3$ iff exactly one $j \in E$ acts upon a non-trailing $1$ in $\x'$. 
Thus, for a grain pattern $E \in \cE_{n,2}$ to contribute to $A_2+A_3$, exactly one grain in the pattern must
act on a non-trailing $1$. The number of such grain patterns $E$ with $|E| = 1$ is precisely $\om - m$. 
It remains to count the number of grain patterns $E$ of cardinality $2$ that contribute to $A_2+A_3$. 
Let $E = \{i,j\}$, where $i$ and $j$ are the grains acting on a trailing $1$ and a non-trailing $1$, respectively. 
If $i$ acts on an ``isolated'' $1$, i.e., a $1$-run of length $1$, then $j$ can act on any of the $\om-m$ 
non-trailing $1$s. On the other hand, if $i$ acts on a trailing $1$ from a $1$-run of length at least $2$, 
then $j$ can be any of the non-trailing $1$s \emph{except} 
for the $1$ at position $i-1$. It follows that the number of grain patterns of cardinality $2$ contributing to 
$A_2+A_3$ equals $m_1(\om-m) + (m-m_1)(\om-m-1)$. Thus,
\beqa
A_2+A_3 &=& (\om-m) + m_1(\om-m) + (m-m_1)(\om-m-1) \\ &=& (\om-m)(m+1) - (m-m_1).
\eeqa

(c)\ This part follows from the fact that $A_4 = |\Phi_2(\x)| - \sum_{j=0}^3A_j$, using the expression for
 $|\Phi_2(\x)|$ given in Proposition~\ref{prop:small_t}(b).
\end{proof}

We are now ready to prove (\ref{eq:sum_y2}). For convenience, we use $V(a)$ to denote $1+a+\binom{a}{2}$.
We start with 
\begin{eqnarray}
\sum_{\y \in \Phi_2(\x)} \frac{1}{V(\om(\y'),2)} 
&\ge& \frac{A_0+A_1}{V(\om)} + \frac{A_2+A_3}{V(\om-2)} + \frac{A_4}{V(\om-4)} \notag \\
&=& 1-\frac{\frac12 (\om-m)(\om+m+1)}{V(\om)} + \frac{A_2+A_3}{V(\om-2)} + \frac{A_4}{V(\om-4)}.
\label{eq:V1}
\end{eqnarray}
The equality above simply uses the fact that $V(\om) - V(m) = \frac12 (\om-m)(\om+m+1)$. Now, note that 
\beqa
\frac{A_2+A_3}{V(\om-2)} + \frac{A_4}{V(\om-4)} 
&\ge& \frac{(\om-m)(m+1)}{V(\om-2)} + \frac{\binom{\om-m}{2}-(\om-m)}{V(\om-4)} \\
&=& \frac{(\om-m)(m+1)}{V(\om-2)} + \frac{\frac12(\om-m)(\om-m-3)}{V(\om-4)}. 
\eeqa
Therefore, carrying on from \eqref{eq:V1}, we have
\begin{eqnarray}
\sum_{\y \in \Phi_2(\x)} \frac{1}{V(\om(\y'),2)} 
&\ge& 1 + (\om-m)\left[\frac{m+1}{V(\om-2)} + \frac{\frac12(\om-m-3)}{V(\om-4)} - \frac{\frac12 (\om+m+1)}{V(\om)} \right] \label{eq:V2} \\
&\ge& 1 + (\om-m)\left[\frac{m+1+\frac12(\om-m-3)}{V(\om-2)} - \frac{\frac12 (\om+m+1)}{V(\om)} \right] \notag \\
&=& 1 + \frac12(\om-m) \left[\frac{\om+m-1}{V(\om-2)} - \frac{\om+m+1}{V(\om)}\right] \notag \\
&=& 1 + \frac{\frac12(\om-m)}{V(\om-2)V(\om)} \, [\om^2+2m\om-(m+3)]. \label{eq:V3}
\end{eqnarray}

If $\om = m$, then (\ref{eq:V2}) proves \eqref{eq:sum_y2}. Else, if $\om \ge m+1$, then the term within square
brackets in \eqref{eq:V3} can be further bounded as follows:
\beqa
\om^2+2m\om-(m+3) &\ge& (m+1)^2 + 2m(m+1) - (m+3) \\
&=& 3m^2+3m - 2,
\eeqa
which is positive for $m \ge 1$. Thus, again, we have \eqref{eq:sum_y2}, which completes the proof of 
the $t=2$ case.

\subsection{Proof for $t=3$}

The approach is the same as that for $t=2$, but the computations are more cumbersome. 
So, let $\x \in \S^n$ be fixed, and let $\om = \om(\x')$. The Hamming weight of $\y'$, for any $\y \in \Phi_3(\x)$,
lies between $\om-6$ and $\om$. For $j = 0,1,\ldots,6$, let $B_j$ be the number of $\y \in \Phi_3(\x)$ such that $\om(\y') = \om-j$.

\begin{lemma}
Let $m$ denote the number of $1$-runs in $\x'$, and let $m_i$, $i = 1,2$, be the number of these that are 
of length $i$. Then,
\begin{itemize}
\item[(a)]  $B_0+B_1 = 1 + m + \binom{m}{2} + \binom{m}{3}$;
\item[(b)]  $B_2+B_3 = (\om-m)\bigl(1+m+\binom{m}{2}\bigr) - m(m-m_1)$;
\item[(c)]  $B_4+B_5 = (1+m)\left[\binom{\om-m}{2} - (\om-m)\right]- (\om-2m-3)(m-m_1) - m_2$;
\item[(d)]  $B_6 = \binom{\om-m}{3}  - (\om-m)(\om-m+1)+ (\om-m)(m-m_1) + 4(\om-2m+m_1) + m_2$.
\end{itemize}
\label{lem:Bj}
\end{lemma}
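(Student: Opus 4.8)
The strategy is to imitate the $t=2$ proof, classifying grain patterns $E\in\cE_{n,3}$ according to how many of their grains act on non-trailing $1$s of $\x'$. Recall from Section~\ref{sec:derivative} that a grain acting on a trailing $1$ reduces $\om(\x')$ by $0$ or $1$, while a grain acting on a non-trailing $1$ reduces it by exactly $2$. Hence if $E$ has $k$ grains acting on non-trailing $1$s and $3-|E|+\dots$ acting on trailing $1$s, then $\om(\y') \in \{\om-2k-(|E|-k),\dots,\om-2k\}$, so the ``blocks'' $B_0+B_1$, $B_2+B_3$, $B_4+B_5$, $B_6$ correspond respectively to $k=0,1,2,3$. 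Thus each of the four formulas (a)--(d) is a pure counting problem: count grain patterns $E$ of size $\le 3$, non-overlapping, drawn from $\supp(\x')$, with a prescribed number $k$ of grains landing on non-trailing $1$s.

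For part (a), $k=0$ means $E$ is a set of non-consecutive integers chosen entirely from the $m$ trailing-$1$ positions $J$; since $J$ itself contains no two consecutive integers, any subset of $J$ of size $\le 3$ is admissible, giving $1+m+\binom m2+\binom m3$. For part (b), exactly one grain of $E$ acts on a non-trailing $1$; one first counts the $|E|=1$ case ($\om-m$ choices), then $|E|=2$ (one trailing grain $i$, one non-trailing grain $j$, with the forbidden-adjacency bookkeeping exactly as in Lemma~\ref{lem:Aj}(b), except now the trailing grain lives in a run of length $\ge 1$ and the parameter $m_1$ records length-$1$ runs), then $|E|=3$ (two trailing grains plus one non-trailing grain). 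The genuinely new combinatorial subtlety, and the place I expect to spend the most care, is parts (c) and (d): here one must handle \emph{pairs} of grains acting on non-trailing $1$s within the \emph{same} $1$-run of $\x'$. Two non-trailing $1$s in a common $1$-run of length $\ell$ may themselves be consecutive integers (hence cannot both be chosen), and the count of admissible pairs within a run of length $\ell$ is $\binom{\ell-1}{2}$ by the elementary lemma already used for \eqref{eq:Phi_t}; summing this over runs introduces $\sum_j\binom{\ell_j-1}{2}$, which is where $m_2$ (the number of length-$2$ runs, for which $\binom{\ell_j-1}{2}=0$ but a naive polynomial expansion would give $-1$, cf.\ the $m_1$-correction in Proposition~\ref{prop:small_t}(c)) enters. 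One must also track, for part (c) with $|E|=3$, the interaction between the one trailing grain and the two non-trailing grains — in particular whether the trailing grain is adjacent to one of the chosen non-trailing positions — giving rise to the $(\om-2m-3)(m-m_1)$ term.

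The cleanest way to organize the calculation is to compute $B_0+B_1$, $B_2+B_3$, and $B_4+B_5$ directly by the case analysis above, and then obtain $B_6$ as $B_6 = |\Phi_3(\x)| - (B_0+B_1) - (B_2+B_3) - (B_4+B_5)$, substituting the expression for $|\Phi_3(\x)|$ from Proposition~\ref{prop:small_t}(c); this is exactly the device used for $A_4$ in Lemma~\ref{lem:Aj}(c) and it spares us the most error-prone direct count. Throughout, I would fix $\x'$, set $J$ to be the set of trailing-$1$ positions and $\supp(\x')\setminus J$ the non-trailing ones, and express every sub-count as a sum over $1$-runs of binomial coefficients in the run lengths $\ell_j$, then re-express those sums via the identities $\sum_j\ell_j=\om$, $\sum_j 1 = m$, $\#\{j:\ell_j=1\}=m_1$, $\#\{j:\ell_j=2\}=m_2$, together with $\sum_j\binom{\ell_j}{2}=\binom\om2-\sum_{i<j}\ell_i\ell_j$ and its cubic analogue — the same algebraic moves as in the proof of Proposition~\ref{prop:small_t}. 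The main obstacle is purely bookkeeping: keeping the adjacency constraints (no two chosen integers consecutive) consistent across the trailing/non-trailing split and across grains sharing a run, and correctly absorbing the resulting low-degree corrections into the parameters $m_1$ and $m_2$. Once parts (a)--(d) are in hand, they feed into a chain of inequalities on $\sum_{\y}1/V(\om(\y'),3)$ mirroring \eqref{eq:V1}--\eqref{eq:V3}, reducing \eqref{eq:sum_y2} for $t=3$ to checking positivity of an explicit low-degree polynomial in $\om,m,m_1,m_2$ on the relevant range.
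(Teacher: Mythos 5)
Your plan is essentially the paper's proof: classify grain patterns $E\in\cE_{n,3}$ by the number $k$ of grains acting on non-trailing $1$s of $\x'$, so that $B_{2k}+B_{2k+1}$ becomes a pure count of such patterns; prove (a) and (b) by extending Lemma~\ref{lem:Aj}; and recover one remaining block by subtraction from $|\Phi_3(\x)|$ via Proposition~\ref{prop:small_t}(c). The only organizational difference is \emph{which} block you subtract: the paper computes $B_6$ directly --- which is actually the easy direct count, since with all three grains on non-trailing $1$s it is exactly the $|E|=3$ portion of the $\Phi$-type count applied to the run lengths $\ell_i^{-}=\ell_i-1$, and $m_2$ enters precisely there as the correction to $\binom{\ell_i^{-}-2}{3}$ when $\ell_i^{-}=1$ --- and it obtains $B_4+B_5$ by subtraction; you propose to count $B_4+B_5$ head-on (the messier case, because of the adjacency between a trailing grain and the non-trailing $1$s of its own run) and subtract to get $B_6$. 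That reversal is legitimate, but it buys you the harder bookkeeping, not the easier.

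Two concrete problems in the part you actually argue. First, your justification of the correspondence between $k$ and the blocks does not go through as stated: from ``a trailing grain reduces the weight by $0$ or $1$'' you only get $\om(\y')\in\{\om-2k-(|E|-k),\dots,\om-2k\}$, and for $k=0$ with two or three trailing grains this range includes weight drops of $2$ or $3$; with that bound alone the blocks $B_0+B_1$ and $B_2+B_3$ would not be determined by $k$, and none of (a)--(d) would be a clean pattern count. What is needed (and what the paper uses implicitly in Lemma~\ref{lem:Aj}(a)) is the sharper fact from Section~\ref{sec:derivative}: a grain on a trailing $1$ reduces the weight only when that $1$ is the final symbol of $\x'$ (the case $*1\to*0$); otherwise $*10*\to*01*$ preserves it. Since at most one grain of any pattern can occupy the last position, the trailing grains jointly reduce the weight by at most $1$, so the drop is exactly $2k$ or $2k+1$ and the classification is clean. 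Second, an off-by-one with a misplaced $m_2$: a $1$-run of $\x'$ of length $\ell$ has $\ell-1$ non-trailing $1$s, so the admissible pairs within it number $\binom{\ell-2}{2}$, not $\binom{\ell-1}{2}$, and this term needs no $m_2$ correction (its polynomial form already vanishes at $\ell=2$; the correction it does need, at $\ell=1$, involves $m_1$). In your organization $m_2$ will instead surface in the $|E|=3$ part of the direct $B_4+B_5$ count, when you exclude non-trailing pairs adjacent to the trailing grain: the within-run partner count $\ell_i-3$ is invalid at $\ell_i=2$. So the plan matches the paper's in substance, but these two points must be repaired for the formulas in (b)--(d) to come out correctly.
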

\begin{proof}
Let $\y = \phi_E(\x)$ for some $E \in \cE_{n,3}$. 

(a)\ This is proved by an easy extension of the proof of Lemma~\ref{lem:Aj}(a).

(b)\ For a grain pattern $E \in \cE_{n,2}$ to contribute to $B_2+B_3$, exactly one grain in the pattern must
act on a non-trailing $1$. The number of such grain patterns $E$ with $|E| \le 2$ is equal to 
$(\om-m)(m+1) - (m-m_1)$ by Lemma~\ref{lem:Aj}(b). Extending the arguments in the proof of 
Lemma~\ref{lem:Aj}(b), we determine that the number of grain patterns of cardinality $3$ that contribute to 
$B_2+B_3$ is equal to $\binom{m_1}{2}(\om-m) + m_1(m-m_1)(\om-m-1) + \binom{m-m_1}{2}(\om-m-2)$.
Thus, 
\beqa
B_2 + B_3 &=& (\om-m)(m+1) - (m-m_1) + \binom{m_1}{2}(\om-m) \\
 & & + \  m_1(m-m_1)(\om-m-1) + \binom{m-m_1}{2}(\om-m-2),
\eeqa
which simplifies to $(\om-m)\bigl(1+m+\binom{m}{2}\bigr) - m(m-m_1)$.

(c)\ This part follows from the fact that $B_4+B_5 = |\Phi_3(\x)| - \sum_{j=0}^3B_j - B_6$, 
using the expression for $|\Phi_3(\x)|$ given in Proposition~\ref{prop:small_t}(c).

(d)\ $B_6$ equals the number of grain patterns $E \in \cE_{n,3}$ with $|E| = 3$, in which all three grains 
act on non-trailing $1$s of $\x'$. The sequence $\x'$ has $m-m_1$ $1$-runs of length at least $2$; let
$\ell_1,\ldots,\ell_{m-m_1}$ denote the lengths of these runs. Then, for $i = 1,\ldots,m-m_1$, $\ell_i^- = \ell_i - 1$
denotes the number of non-trailing $1$s in these runs. With this, we can write
$$
B_6 = \sum_{(i,j,k): i < j< k} \ell^-_i \ell^-_j \ell^-_k + \sum_{i=1}^{m-m_1} \binom{\ell^-_i - 1}{2}\biggl(\sum_{j: j \ne i} \ell^-_j\biggr) + \sum_{i=1}^{m-m_1} \binom{\ell^-_i - 2}{3}. 
$$
From this, straightforward algebraic manipulations yield the expression in the statement of the lemma. 
The algebra here is analogous to that needed to prove Proposition~\ref{prop:small_t}(c).
\end{proof}

For convenience, we define $U(a)$ to be $1+a+\binom{a}{2} + \binom{a}{3}$. We then have
\beq
\sum_{\y \in \Phi_2(\x)} \frac{1}{V(\om(\y'),3)} \ge \sum_{j=0}^2 \frac{B_j + B_{j+1}}{U(\om-2j)} + \frac{B_6}{U(\om-6)}.
\label{eq:U_main}
\eeq
The aim is to show, using Lemma~\ref{lem:Bj}, that the right-hand side of the above inequality is at least $1$.
We dispose of an easy case first. If $\om = m$, then note that we must have $m = m_1 = \om$, and $m_2 = 0$.
With this, Lemma~\ref{lem:Bj} yields $B_0+B_1 = U(\om)$, and $B_2+B_3 = B_4 + B_5 = B_6 = 0$. Hence,
the right-hand side of \eqref{eq:U_main} simplifies to $\frac{U(\om)}{U(\om)} = 1$. This proves the desired 
inequality \eqref{eq:sum_y2} when $\om = m$. 

Also, for small values of $\om$, it can be checked by direct computation using Lemma~\ref{lem:Bj}
that the right-hand side of \eqref{eq:U_main} is at least $1$. We used a computer to check this for $\om \le 16$
and all valid choices of $m$, $m_1$ and $m_2$. Here, ``valid'' means that these quantities must be 
realizable as the number of $1$-runs of the appropriate type in a binary sequence $\x'$ of 
Hamming weight $\om$.

Thus, we may henceforth assume that $1 \le m \le \om-1$ and $\om \ge 17$.

We carry out some more simplifications. The idea is to justify ignoring the terms that involve $m_1$ and $m_2$ 
in the formulae stated in Lemma~\ref{lem:Bj}. When we expand out 
$\frac{B_2+B_3}{U(\om-2)} + \frac{B_4+B_5}{U(\om-4)} + \frac{B_6}{U(\om-6)}$ using Lemma~\ref{lem:Bj}, 
we obtain an expression that includes the following terms:
$$
-\frac{m(m-m_1)}{U(\om-2)} - \frac{(\om-2m-3)(m-m_1) + m_2}{U(\om-4)} 
+ \frac{(\om-m)(m-m_1) + 4(\om-2m+m_1) + m_2}{U(\om-6)}
$$
Re-write this as 
\begin{align*}
m(m-m_1)\left[\frac{1}{U(\om-4)} - \frac{1}{U(\om-2)}\right] 
&+ [(\om-m)(m-m_1)+m_2] \, \left[\frac{1}{U(\om-6)} - \frac{1}{U(\om-4)}\right] \\
&+ \, \frac{3(m-m_1)}{U(\om-4)} + \frac{4(\om-2m+m_1)}{U(\om-6)}.
\end{align*}
The above expression is a sum of four terms, each of which is non-negative. (To see that the last term is
non-negative, observe that $\om \ge m_1 + 2(m-m_1) = 2m - m_1$; this is because each $1$-run counted by 
$m_1$ contains exactly one $1$, while the remaining $m-m_1$ $1$-runs contain at least two $1$s each.) 
Therefore, the sum $\frac{B_2+B_3}{U(\om-2)} + \frac{B_4+B_5}{U(\om-4)} + \frac{B_6}{U(\om-6)}$ is at least
$$
\frac{(\om-m)(1+m+\binom{m}{2})}{U(\om-2)} 
+ \frac{(1+m)\left[\binom{\om-m}{2} - (\om-m)\right]}{U(\om-4)} 
+ \frac{\binom{\om-m}{3}  - (\om-m)(\om-m+1)}{U(\om-6)},
$$
which can also be expressed as 
\beq
(\om-m) \, 
\left[ \frac{1+m+\binom{m}{2}}{U(\om-2)} + \frac{\frac12 (1+m)(\om-m-3)}{U(\om-4)}
+ \frac{\frac16 \bigl[(\om-m)^2 - 9(\om-m) - 4\bigr]}{U(\om-6)}
\right].
\label{sum_b-d}
\eeq
Next, we write 
\begin{eqnarray}
\frac{B_0 + B_1}{U(\om)} &=& \frac{U(m)}{U(\om)}  \ \ = \ \ 1 - \frac{U(\om) - U(m)}{U(\om)} \notag \\
& = & 1 - \frac{\frac16 (\om-m) (\om^2+\om m + m^2 + 5)}{U(\om)}. 
\label{sum_a}
\end{eqnarray}

Putting \eqref{sum_b-d} and \eqref{sum_a} together, we find that the right-hand side of \eqref{eq:U_main}
is lower bounded by 
\beq
1 + (\om-m) g_\om(m), 
\label{eq:g_om}
\eeq
where 
\beqa
g_\om(m) & = & \frac{1+m+\binom{m}{2}}{U(\om-2)} + \frac{\frac12 (1+m)(\om-m-3)}{U(\om-4)} \\ 
& & + \, \frac{\frac16 \bigl[(\om-m)^2 - 9(\om-m) - 4\bigr]}{U(\om-6)} - \frac{\frac16(\om^2+\om m + m^2 + 5)}{U(\om)}. 
\eeqa

For a fixed $\om$, consider $g_\om$ as a function of $m$. Some tedious computations (some of which 
were performed with the aid of Maple) show the following:
\begin{itemize}
\item for $\om \ge 6$, $g_\om$ is a convex function, i.e., $g_\om''(x) \ge 0$ for $1 \le x \le \om$;
\item for $\om \ge 12$, $g_\om'(\om-1) \le 0$;
\item for $\om \ge 13$, $g_\om(\om-1) \ge 0$.
\end{itemize}
From this, we obtain the fact that, as long as $\om \ge 13$, we have $g_\om(m) \ge 0$
for $1 \le m \le \om-1$. Thus, for these values of $\om$ and $m$, \eqref{eq:g_om} yields that the 
right-hand side of \eqref{eq:U_main} is lower bounded by $1$. Recalling that we only needed to show
this for $\om \ge 17$, the proof of the $t=3$ case in Theorem~\ref{thm:Mnt_upbnd2} is complete.

\section{An Upper Bound on $R(\tau)$\label{sec:Rtau_upbnd1}}

Were they to be proved, Conjectures~\ref{conj:Phi_t} and \ref{conj:V} would yield upper bounds on the asymptotic rate $R(\tau)$, as defined in \eqref{eq:Rtau}. Instead, a slightly different approach\footnote{This approach was suggested to the authors by Artyom Sharov and Ronny Roth.} can be used to obtain a fractional covering that does result in a provable upper bound on $R(\tau)$. 

Suppose that for any fixed $n,t$, we could find a lower bound $\varphi_{n,t}(r)$ on $|\Phi_t(\x)|$, $\x \in \S^n$, that depends on $\x$ only through $r = r(\x)$, the number of distinct runs in $\x$. Furthermore, suppose that the function $\varphi_{n,t}(r)$ is non-decreasing in $r$ \cite[Section~3]{SR13}. 
Then, it is straightforward to see that the function $\x \mapsto \frac{1}{\varphi_{n,t}(r(\x))}$ is a fractional covering of the hypergraph $\cH_{n,t}$ for all positive integers $n$ and $t$.  Indeed, by Lemma~\ref{lem:runs}, we have
$$
\sum_{\y \in \Phi_t(\x)} \frac{1}{\varphi_{n,t}(r(\y))} \ge \sum_{\y \in \Phi_t(\x)} \frac{1}{\varphi_{n,t}(r(\x))} = 
\frac{|\Phi_t(\x)|}{\varphi_{n,t}(r(\x))} \ge 1.
$$
Thus, for any such $\varphi_{n,t}$, we have
\beq
M(n,t) \le \sum_{\x \in \Phi_t(\x)} \frac{1}{\varphi_{n,t}(r(\x))} = 2 \, \sum_{r=1}^n \binom{n-1}{r-1} \frac{1}{\varphi_{n,t}(r)}.
\label{eq:Mnt_upbnd3}
\eeq

\begin{theorem}
For all positive integers $n$ and $t$, the upper bound \eqref{eq:Mnt_upbnd3} holds with 
\beq
\varphi_{n,t}(r) = \sum_{j=0}^t \binom{r-j}{j}
\label{eq:varphi}
\eeq
\label{thm:Mnt_upbnd3}
\end{theorem}
\begin{proof}
The expression on the right-hand side of \eqref{eq:varphi} is clearly non-decreasing in $r$, and by Proposition~\ref{prop:Phi_lobnd}, $\varphi_{n,t}(r(\x))$ is a lower bound on $|\Phi_t(\x)|$ for any $\x \in \S^n$ and $t > 0$. 
\end{proof}

The bound of Theorem~\ref{thm:Mnt_upbnd3} is weaker than that of Theorems~\ref{thm:Mnt_upbnd1} and \ref{thm:Mnt_upbnd2} for $t = 1,2,3$. However, it has the advantage of being provably true for all values of $n$ and $t$. It can therefore be used to derive an upper bound on $R(\tau)$ by studying the asymptotics of $\varphi_{n,t}(r)$ as $n \to \infty$, with $t = \lceil\tau n\rceil$ and $r = \lceil\rho n\rceil$ for fixed $\tau \in [0,\frac12]$ and $\rho \in (0,1]$. The following theorem is proved in Appendix~B.

\begin{theorem}
Let $\phi = \frac{1+\sqrt{5}}{2}$ (the golden ratio), and define $\theta = \frac{1}{\sqrt{5}\phi(\phi+1)}$. For $\tau \in [0,\frac12]$, we have
$$
R(\tau) \le 
\begin{cases} 
{\displaystyle \max_{\sqrt{5}\phi\tau \le \rho \le 1}} \ \left[\sfh(\rho) - (\rho-\tau) \, \sfh\left(\frac{\tau}{\rho-\tau}\right)\right] & \text{ if } \tau < \theta \\ 
\log_2\phi & \text{ if } \tau \ge \theta
\end{cases}
$$
Numerically,  $\theta \approx 0.1056$, and $\log_2 \phi  \approx 0.6942$.
\label{thm:Rupbnd1}
\end{theorem}

Figure~\ref{fig:upper} contains a plot of the above upper bound. The figure shows that this is the best known upper bound for values of $\tau$ up to about $0.1103$, beyond which it is beaten by the bound of the next section.

\section{An Information-Theoretic Upper Bound on $R(\tau)$\label{sec:Rtau_upbnd2}}

In this section, we use an information-theoretic approach to derive an upper bound on $R(\tau)$. 
For every even $n$, by grouping together adjacent coordinates, we can view any code $C\in\{0,1\}^n$ as 
a code of blocklength $n/2$ over the alphabet $\{00,01,10,11\}$. Let us say that a binary $n$-tuple, 
alternatively an $n/2$-tuple over the quaternary alphabet, has {\em quaternary distribution} 
(or simply {\em distribution})
  $(f_{00},f_{11},f_{01},f_{10})$
if it has $f_{00}n/2$ symbols $00$, $f_{11}n/2$ symbols $11$, $f_{01}n/2$ symbols $01$ and $f_{10}n/2$
symbols $10$. We will say that a code has {\em constant distribution} if
each of its codewords has the same quaternary distribution
$(f_{00},f_{11},f_{01},f_{10})$.
Our goal is to find upper bounds on the rate of $\lceil \tau n\rceil$-grain-correcting codes of constant distribution: 
since the number of possible quaternary distributions for a code of length $n$ is $O(n^3)$, the maximum
of these upper bounds on constrained codes will yield an unconstrained upper bound.

Let us introduce the following notation:
$$
R_f(\tau) = \limsup_{n\to\infty} \frac{1}{n} \, \log_2 M(n,f,\lceil \tau n \rceil)
$$
where $M(n,f,t)$ denotes the maximum cardinality of a $t$-grain error
correcting code of length $n$ and constant quaternary distribution $f$.

Our strategy is the following: for any given distribution
$f = (f_{00},f_{11},f_{01},f_{10})$, we associate to
it a discrete memoryless channel (DMC) with input and output alphabets
$\{00,01,10,11\}$
such that any infinite family of $\lceil\tau n\rceil$-grain-correcting codes of constant distribution $f$ 
achieves vanishing error-probability when submitted through this channel. By a standard information-theoretic
argument, this implies that the asymptotic rate $R$ of any family of $\lceil\tau n\rceil$-grain-correcting codes of
constant distribution $f$ is bounded from above by half the mutual information between the channel input
with probability distribution $f$ and the channel output.

\begin{figure}[t]
  \centering
\begin{tikzpicture}
    \path (0,6) node (in) {In} (3,6) node (out) {Out}
          (0,5) node (x11) {$11$} (3,5) node (y11) {$11$}
          (0,3) node (x10) {$10$} (3,3) node (y10) {$10$}
          (0,2) node (x00) {$00$} (3,2) node (y00) {$00$}
          (0,0) node (x01) {$01$} (3,0) node (y01) {$01$};
    \draw[->] (x01) -- node[above] {$1-p$} (y01);
    \draw[->] (x00) -- (y00);
    \draw[->] (x01) -- node[above,sloped] {$p$} (y00);

    \draw[->] (x10) -- node[above] {$1-p$} (y10);
    \draw[->] (x11) -- (y11);
    \draw[->] (x10) -- node[above,sloped] {$p$} (y11);   
  \end{tikzpicture}
    \caption{A DMC whose effect can be mimicked by grain patterns}
    \label{fig:dmc}
\end{figure}
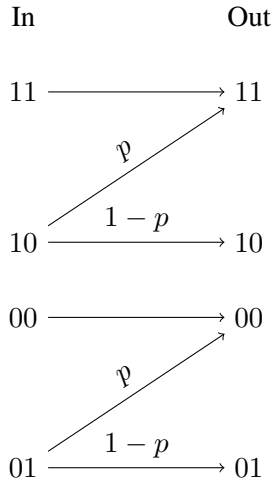

Consider the channel depicted in Figure~\ref{fig:dmc}. Let $C$ be a member of a family of 
$\lceil\tau n\rceil$-grain-correcting codes of length $n$ and constant distribution $f$. Suppose that 
$$(f_{10}+f_{01})pn/2 \leq \tau n(1-\varepsilon),$$
where $p$ is the transition probability shown in Figure~\ref{fig:dmc}.
When a binary $n$-tuple, equivalently a word of length $n/2$ over the
alphabet $\{00,01,10,11\}$, is transmitted over the channel, then with
probability tending to $1$ as $n$ goes to infinity, the number of 
transitions $01\rightarrow 00$ plus the number of transitions
$10\rightarrow 11$ is not more than $\lceil\tau n\rceil$. Since these
transitions are of the kind caused by grain errors, if there are no more
than $\lceil\tau n\rceil$ such transitions, then the errors they cause
are correctable by any $\lceil\tau n\rceil$-grain-correcting code. 
Therefore, for any $\varepsilon >0$, any family of $\lceil\tau n\rceil$-grain-correcting codes 
of constant distribution $f$ can be transmitted over the above channel with vanishing error
probability after decoding. By a continuity argument we conclude that:
\begin{equation}
  \label{eq:Rf}
  R_f(\tau) \leq \frac 12 I(X,Y)
\end{equation}
where $X$ is the channel input with probability distribution $p(X)=f$, 
and $Y$ is the corresponding output of the channel with parameter
\begin{equation}
  \label{eq:p}
  p=\frac{2\tau}{f_{10}+f_{01}}.
\end{equation}

\medskip

It remains to compute the mutual information $I(X,Y)$. Since $p\leq 1$, \eqref{eq:p}
implies that we can write
\begin{align}
  f_{10}+f_{10} &= 2\tau +x\label{eq:2t+x}\\
  f_{00}+f_{11}&= 1-2\tau -x\label{eq:1-2t-x}
\end{align}
with $x$ non-negative. Now, for every distribution satisfying
\eqref{eq:2t+x} and \eqref{eq:1-2t-x} we have
$$H(Y\,|\, X) = \left(2\tau +
    x\right)\sfh\left(\frac{2\tau}{2\tau +x}\right),$$
where $\sfh(\cdot)$ is the binary entropy function defined by $\sfh(\xi) = -\xi\log_2\xi - (1-\xi)\log_2(1-\xi)$, 
for $\xi \in [0,1]$. This implies that $I(X,Y)=H(Y)-H(Y\,|\, X)$ is maximum under the 
constraints \eqref{eq:2t+x} and \eqref{eq:1-2t-x} when $H(Y)$ is
maximum, i.e. under the distribution:
    $$P(Y=10)=P(Y=01)=\frac x2,\;\;P(Y=00)=P(Y=11)=\frac{1-x}{2}.$$
Therefore, we obtain
\begin{equation}
  \label{eq:I(XY)}
  I(X,Y) \leq 1+\sfh(x) - \left(2\tau +
    x\right)\sfh\left(\frac{2\tau}{2\tau +x}\right),
\end{equation}
which together with \eqref{eq:Rf} gives
  $$R_f(\tau) \leq \frac 12 \left[1+\sfh(f_{10}+f_{01}-2\tau) -
    \left(f_{10}+f_{01}\right)\sfh\left(\frac{2\tau}{f_{10}+f_{01}}\right)\right]. $$
The right hand side of \eqref{eq:I(XY)} is maximized for $x=1/2-\tau$,
thus yielding the unconstrained upper bound stated in the theorem below.

\begin{theorem} For $\tau \in [0,\frac12]$, we have
$$
  R(\tau) \leq \frac 12\left( 1+ \sfh\left(\frac 12-\tau\right) -
    \left(\frac 12 +\tau\right)\sfh\left(\frac{2\tau}{\frac 12 +
        \tau}\right)\right).
$$
\label{thm:Rupbnd2}
\end{theorem}

\begin{figure}[th]
\centerline{\scalebox{0.6}{\includegraphics{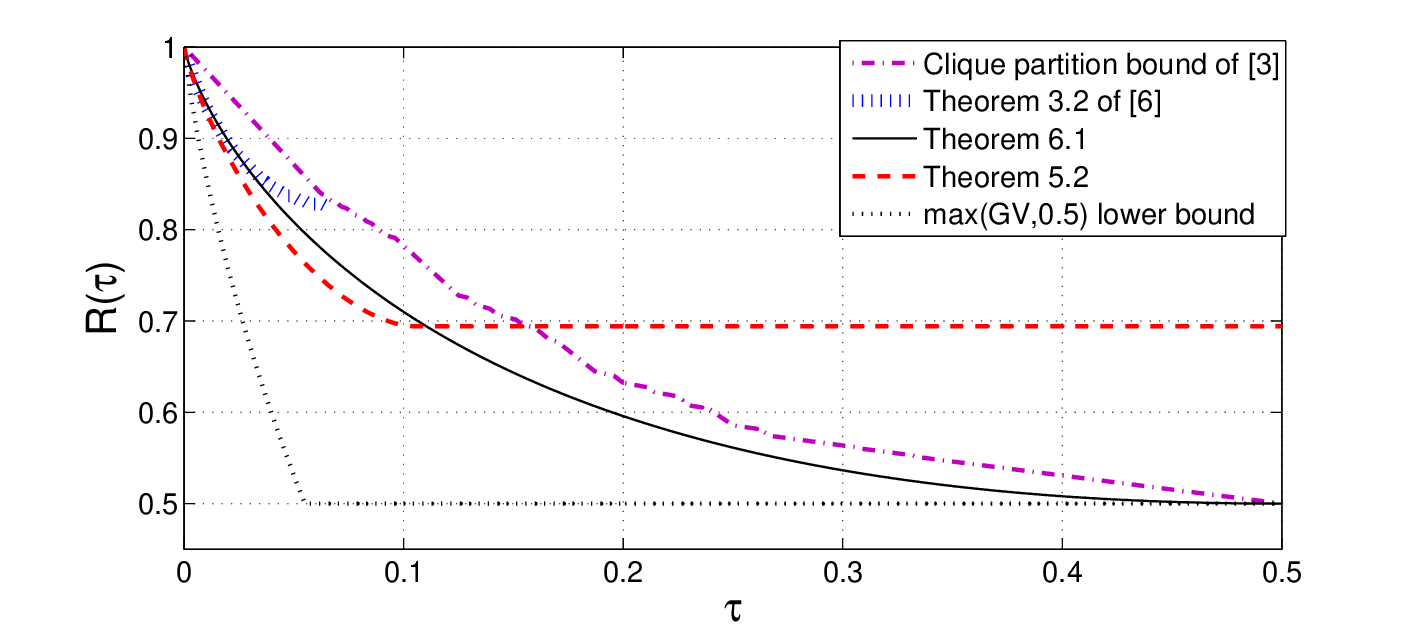}}}
\caption{The upper bounds of Theorems~\ref{thm:Rupbnd1} and \ref{thm:Rupbnd2}, along with bounds from \cite{MBK11} and \cite{SR13}.}
\label{fig:upper}
\end{figure}

The upper bounds of Theorems~\ref{thm:Rupbnd1} and \ref{thm:Rupbnd2} are plotted in Figure~\ref{fig:upper}. 
For comparison, also plotted are upper and lower bounds from \cite[Figure~1]{MBK11}, and the upper bound of Sharov and Roth \cite[Theorem~3.2]{SR13}. The upper bounds from \cite{MBK11} and \cite{SR13} are the best bounds in the prior literature. Figure~\ref{fig:upper} clearly shows that the upper bounds of Theorem~\ref{thm:Rupbnd1} and \ref{thm:Rupbnd2}  improve upon the previously known upper bounds, but still remain far from the lower bound plotted. It should be pointed that a slightly better lower bound was found by Sharov and Roth \cite{SR11}. Unfortunately, the improvement is only minor: the lower bound of \cite{SR11} remains above $0.5$ only in the interval $[0,0.0566]$, and
in that interval, the improvement does not exceed $0.012$. 



\section{Concluding Remarks\label{sec:conclusion}}

In this paper, we derived upper bounds on the maximum cardinality, $M(n,t)$, of a binary 
$t$-grain-correcting code of blocklength $n$, and also on the asymptotic rate $R(\tau$). In nearly all cases, 
the gap between the upper bound and the best known lower bound remains significant. A natural 
question to ask is whether the putative upper bounds on $M(n,t)$ in Conjectures~\ref{conj:Phi_t} and \ref{conj:V}
would yield a better bound on $R(\tau)$. 

\begin{figure}[th]
\centerline{\scalebox{0.6}{\includegraphics{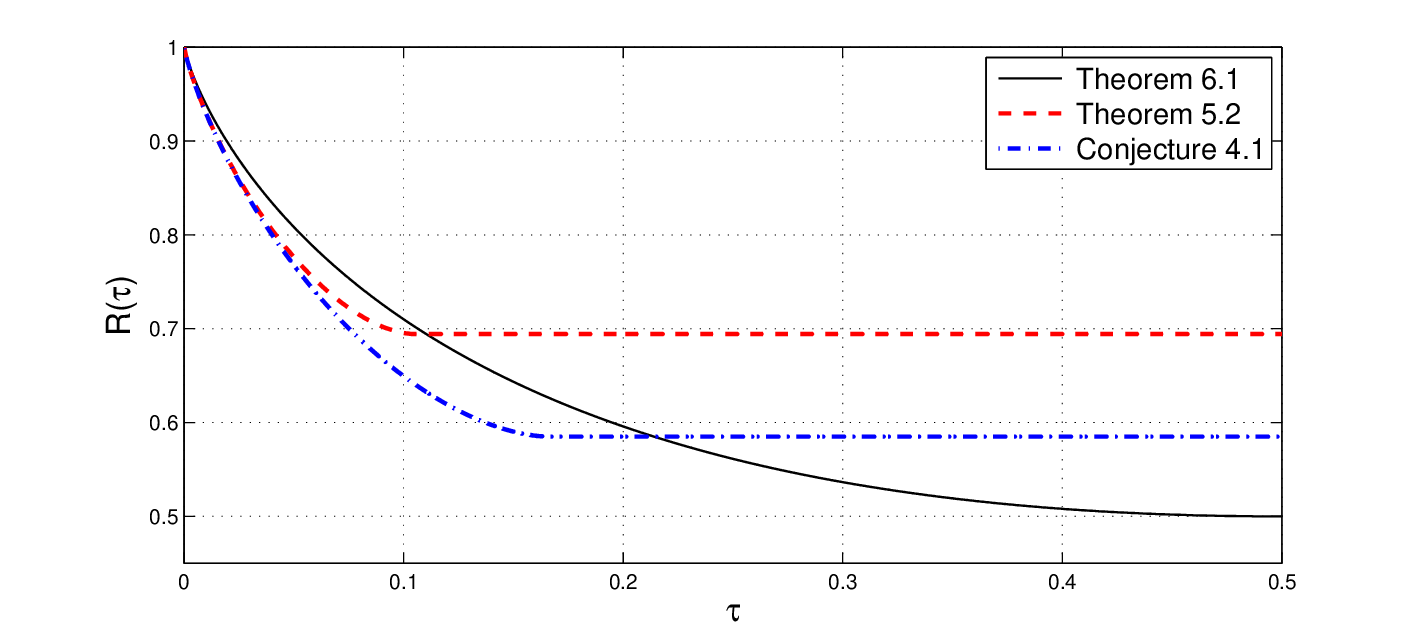}}}
\caption{The upper bounds of Theorems~\ref{thm:Rupbnd1} and \ref{thm:Rupbnd2} compared with the asymptotic bound obtained from Conjecture~\ref{conj:V}.}
\label{fig:conj4.1}
\end{figure}

The bound in Conjecture~\ref{conj:V} is the stronger of the two conjectured bounds, and its asymptotics are 
straightforward to analyze. Let $\overline{M}(n,t) = 2\sum_{\om = 0}^{n-1} \binom{n-1}{\om} \, \frac{1}{V(\om,t)}$ 
denote the upper bound in (\ref{Mnt_upbnd2}), and let 
$\overline{R}(\tau) = \lim_{n\to\infty} \frac{1}{n} \log_2 \overline{M}(n,n\tau)$. Conjecture~\ref{conj:V} implies that
$R(\tau) \le \oR(\tau)$. By standard asymptotic analysis, we obtain
$$\oR(\tau) = \max_{0 \le \nu \le 1} [\sfh(\nu) - \eta(\nu)],$$
where $\eta(\nu)$ equals $\nu$ if $\nu \le 2\tau$, and equals $\nu \sfh(\tau/\nu)$ otherwise. Thus,
$$
\oR(\tau) = \max \biggl\{\max_{0 \le \nu \le 2\tau} [\sfh(\nu) - \nu], 
    \, \max_{2\tau \le \nu \le 1} [\sfh(\nu) - \nu \sfh(\tau/\nu)]\biggr\}.
$$

Using elementary calculus to solve the two maximization problems within the braces in the above equation, and 
comparing the solutions (details of these calculations are omitted), we obtain the following:
\beq
\oR(\tau) = \begin{cases}
\sfh(\nu^*) - \nu^*\sfh(\tau/\nu^*) & \text{ if } \tau < 1/6 \\
\sfh(1/3) - 1/3 & \text{ if } \tau \ge 1/6
\end{cases}
\label{eq:oR1}
\eeq
where $\nu^* = \frac{1}{4}(\tau+1+\sqrt{\tau^2-6\tau+1})$. Numerically, $\sfh(1/3)-1/3 \approx 0.5850$. This bound is compared with the bounds of 
Theorem~\ref{thm:Rupbnd2} in Figure~\ref{fig:conj4.1}. The plot shows that the conjectured upper bound 
(\ref{eq:oR1}) is (expectedly) better than the bound of Theorem~\ref{thm:Rupbnd1}, and improves upon the bound of Theorem~\ref{thm:Rupbnd2} for $\tau < 0.214$.

\section*{Appendix~A}
 \renewcommand{\thetheorem}{A.\arabic{theorem}} 

In this appendix, we prove the following lemma.

\begin{lemma}
For $t=2,3$, a pairing satisfying (\ref{pairing}) can be constructed.
\label{lemma:pairing}
\end{lemma}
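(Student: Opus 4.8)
The plan is to make explicit which sequences $\y\in\Phi_t(\x)$ violate $|\Phi_t(\y)|\le|\Phi_t(\x)|$ — i.e.\ which $\y$ lie in the set $F_t(\x)$ of the main text — and then to build the pairing $p$ by a local surgery on the grain pattern that produces such a $\y$. Write $\y=\phi_E(\x)$ with $E\in\cE_{n,t}$. By Proposition~\ref{prop:small_t}, $|\Phi_t(\y)|$ depends only on $\om(\y')$, on the number $m(\y')$ of $1$-runs of $\y'$, and (for $t=3$) on the number $m_1(\y')$ of those runs of length $1$, and each of these is governed by the local rewriting rules of Section~\ref{sec:derivative}. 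Since a grain acting on a $1$ never raises $\om(\x')$, while any $\y$ with $\om(\y')<\om:=\om(\x')$ has a strictly smaller dominant (binomial) term in $|\Phi_t(\cdot)|$, a short computation shows that — except for a bounded range of small $\om$, which we check directly — every $\y\in F_t(\x)$ has $\om(\y')=\om$; equivalently, $E$ consists only of grains acting on non-suffix trailing $1$s. Among these, the only ones that perturb the run statistics are those acting on the trailing $1$ of a $1$-run of length $\ge 2$ whose action creates a new, isolated $1$ in $\y'$: such a grain \emph{splits} that run, raising $m(\y')$, or (when the run has length $2$) at least raising $m_1(\y')$. Hence $|\Phi_t(\y)|-|\Phi_t(\x)|>0$ forces $E$ to contain at least one such splitting grain, and then the difference is a small nonnegative constant $a_\y$ — at most $2$ when $t=2$, still bounded when $t=3$ — determined by the number and types of these grains (the enumeration involved is the same bookkeeping that underlies Lemmas~\ref{lem:Aj} and \ref{lem:Bj}).

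With this in hand, I would define $p$ as follows: given $\y=\phi_E(\x)\in F_t(\x)$, for every splitting grain of $E$, acting on a trailing $1$ at a position $i$ inside a $1$-run of length $\ge 2$, replace it by the grain acting on the non-trailing $1$ at position $i-1$ of that same run; keep all other grains of $E$ unchanged, obtain $E'\in\cE_{n,t}$, and set $p(\y):=\phi_{E'}(\x)$. Three points must be checked. (i) $E'$ is legal: the only adjacency that could fail is $i-2\in E'$, but membership of $\y$ in $F_t(\x)$ forces $E$ to contain no grain acting on a non-trailing $1$, and $x'_{i-2}$ is either $0$ (run of length $2$) or itself a non-trailing $1$ (run of length $\ge 3$), so $i-2\notin E$, whence $i-2\notin E'$. (ii) $p$ is injective: in $p(\y)=\phi_{E'}(\x)$ every grain acting on a non-trailing $1$ was introduced by the surgery (again because $E$ had none), so $E$, and therefore $\y$, is recovered from $p(\y)$ by undoing each replacement. (iii) $p(\y)\in G_t(\x)$: moving the grain from the trailing $1$ at $i$ to the non-trailing $1$ at $i-1$ lowers $\om(\cdot')$ by $2$ without creating any new $1$-run (the $0$ already at position $i+1$ blocks a split on the right, and shortening the run on the left produces no split either), so $\om(p(\y)')=\om-2r$ and $m(p(\y)')\le m(\x')$, where $r\ge1$ is the number of replaced grains; by Proposition~\ref{prop:small_t} this gives $|\Phi_t(p(\y))|\le|\Phi_t(\x)|$, i.e.\ $p(\y)\in G_t(\x)$.

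It then remains to verify (\ref{pairing}). Put $c=|\Phi_t(\x)|$. From the above, $|\Phi_t(\y)|\le c+a_\y$ with $a_\y$ a small constant, while lowering $\om(\cdot')$ by $2r\ge2$ decreases $|\Phi_t(\cdot)|$ by an amount $b_\y$ that grows at least linearly in $\om$ when $t=2$ and at least quadratically in $\om$ when $t=3$ (this comes from the $\binom{\om}{2}$, resp.\ $\binom{\om}{3}$, term of Proposition~\ref{prop:small_t}, which dominates the bounded contributions coming from $m_1$ and from the shortened runs); in particular $b_\y\ge a_\y$ once $\om$ exceeds the small threshold already set aside, and $|\Phi_t(p(\y))|\le c-b_\y$. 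Since $\frac1{c+a}+\frac1{c-b}\ge\frac2c$ whenever $0\le a\le b<c$ (the difference equals $\frac{c(b-a)+2ab}{c(c+a)(c-b)}\ge0$), we obtain $\frac1{|\Phi_t(\y)|}+\frac1{|\Phi_t(p(\y))|}\ge\frac2c$, which is (\ref{pairing}). Together with the telescoping computation in the main text this yields (\ref{eq:sum_y}), hence (\ref{Mnt_upbnd}) for $t=2,3$, i.e.\ Theorem~\ref{thm:Mnt_upbnd}.

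The main obstacle is the combinatorial bookkeeping for $t=3$. There one must enumerate the ways a pattern of one, two, or three trailing-$1$ grains can jointly affect $m$ and $m_1$ — separating long runs from length-$2$ runs, following $0$-runs from following $1$-runs, and handling suffix positions — and, in each configuration, confirm both that membership in $F_3(\x)$ forces the clean ``trailing-$1$s only'' structure above and that the surgery respects non-consecutivity, and then pin down the constants $a_\y$ and $b_\y$. The residual small range of $\om$, where $b_\y\ge a_\y$ is not yet guaranteed, is disposed of by direct computation, exactly as for the analogous bound in Section~\ref{sec:Mnt_bnd2}. (For $t=2$ the same argument runs uniformly, with no case analysis and no exceptional small values to treat separately.)
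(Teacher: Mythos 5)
Your overall strategy coincides with the paper's Appendix~A proof: restrict to grain patterns acting on trailing $1$s, pair $\y$ with the word obtained by sliding the offending grains one position to the left so that they act on the adjacent non-trailing $1$ (dropping $\om(\cdot')$ by $2$ per moved grain), and conclude via $\frac1{c+a}+\frac1{c-b}\ge\frac2c$ for $0\le a\le b<c$, which is the same averaging/concavity step the paper uses. For $t=2$ your sketch is essentially the paper's argument and is fine. For $t=3$, however, the proposal has genuine gaps exactly where the content of the lemma lies. First, your quantitative claim that $|\Phi_3(\y)|-|\Phi_3(\x)|$ is a ``small nonnegative constant'' is false: by Proposition~\ref{prop:small_t}(c), raising the number of $1$-runs by $d$ raises $|\Phi_3|$ by roughly $d(\om-3)$, i.e.\ linearly in $\om$ (the paper's bound is $2e_2+e_3+e_4+(\om-3)d$ in \eqref{eq:diffyx}). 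So the decisive step is a race between a gain growing linearly in $\om$ and a loss growing quadratically in $\om$, and this comparison --- the analogue of \eqref{eq:diffyx}--\eqref{eq:diffzx}, the use of $m\ge d$, and the case analysis $d=1,2,3$ --- is precisely what you defer as ``combinatorial bookkeeping''; without it the $t=3$ case is not proved. Second, your structural reduction rests on the unproved assertion that every $\y\in F_t(\x)$ has $\om(\y')=\om$, ``except for a bounded range of small $\om$, which we check directly.'' The paper proves for $t=3$ only the weaker fact $\om(\y')\ge\om-1$ and therefore must allow a grain acting on a suffix trailing $1$ in its construction; more importantly, your fallback is not a finite verification as stated, since $\x$ ranges over all blocklengths $n$ while $\om(\x')$ is small --- you would at least have to argue that membership in $F_t(\x)$ and the pairing inequality depend only on the run-length profile of $\x'$ --- whereas the paper needs no exceptional range at all: the relevant quadratics ($\om^2-9\om+22$, $\om^2-7\om+14$, $\om^2-9\om+24$, etc.) are negative for every $\om$.

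Two smaller points. In bounding $|\Phi_3(p(\y))|$ you track only $\om$ and $m$, but the unmoved grains can increase $m_1$ (e.g.\ a trailing $1$ of a length-$2$ run followed by $01$ leaves behind an isolated $1$), and $m_1$ enters $|\Phi_3|$ with a positive sign; the paper's estimate explicitly allows up to $m_1+e_3$ length-$1$ runs in $\z'$, and your ``bounded contributions from $m_1$'' needs the same accounting. Finally, injectivity requires both a well-defined choice of $E$ representing each $\y\in F_t(\x)$ and a recovery of $E'$ (hence $E$, hence $\y$) from $p(\y)=\phi_{E'}(\x)$ given $\x$; your one-line argument only undoes the surgery at the level of grain patterns, not of words, though the paper is admittedly also terse on this point.
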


We introduce some convenient notation that will be used in the proof. Notation of the form 
$$
\Breve{a} b \longrightarrow c d \ \  \text{ or }  \ \ \Breve{a} b \stackrel{E}{\longrightarrow} c d 
$$
will be used to denote the fact that a length-2 grain (from the grain pattern $E$) 
acting on $a$ converts the substring $ab$ of the derivative sequence $\x'$ to the substring $cd$ of $\y'$.

\medskip

\underline{Case I: $t=2$}.\ \  
Let $\y \in \Phi_2(\x)$. Let $\om$ and $m$ be the number of $1$-runs and the Hamming
weight, respectively, of $\x'$, and let $\tom$ and $\tm$ denote the same for $\y'$. Since the action of a grain
pattern cannot increase the weight of $\x'$, we have $\tom \le \om$. Also, note that a single grain
can cause the number of $1$-runs in $\x'$ to increase by a most $1$ --- an increase by $1$ happens 
either when $*1\Breve{1}00* \longrightarrow *1010*$, or when $*1\Breve{1}11* \longrightarrow *1001*$.
Thus, $\tm \le m+2$. 

We first show that $|\Phi_2(\y)| > |\Phi_2(\x)|$ iff $\tom = \om$ and $\tm > m$. The ``if'' part is clearly true
by Proposition~\ref{prop:small_t}(b). For the ``only if'' part, suppose that $\tm \le m$. Then, 
since $\tom \le \om$ is always true, we have $|\Phi_2(\y)| \le |\Phi_2(\x)|$ by Proposition~\ref{prop:small_t}(b). 
On the other hand, if $\tom \le \om -1$, then since $\tm \le \tom$ is always true (the number of $1$-runs
cannot exceed the number of $1$s), we have 
$$
|\Phi_2(\y)| - |\Phi_2(\x)| = (\tm - m) + \binom{\tom}{2} - \binom{\om}{2} 
\le \tom + \binom{\om-1}{2} - \binom{\om}{2}
= \tom - (\om-1) \le 0.
$$
Hence, $|\Phi_2(\y)| \le |\Phi_2(\x)|$.

Thus, $F_2(\x) = \{\y \in \Phi_2(\x): \tom = \om \text{ and } \tm > m\}$. If $\x$ has weight $\om = 1$, 
then there is no $\y \in \Phi_2(\x)$ for which $\tm > m$, so that $F_2(\x) = \emptyset$. We henceforth 
consider $\om \ge 2$. We assume $\y \in F_2(\x)$, and suppose that $E \in \cE_{n,2}$ is a grain pattern 
such that $\y = \phi_E(\x)$. Since $\tom = \om$, the grains
in $E$ do not act upon non-trailing $1$s in $\x'$. Let $\tm = m+a$, $a = 1$ or $2$, 
so that $|\Phi_2(\y)| = |\Phi_2(\x)| + a$. We will construct a $\z \in G_2(\x)$ with which $\y$ can be paired.

Suppose first that $a=1$, i.e., $\tm = m+1$. There is a unique segment $*1100*$ of $\x'$ such 
that $*1\Breve{1}00* \stackrel{E}{\longrightarrow} *1010*$. Let $j$ be the position of the trailing $1$ affected.
Consider the grain pattern $E' \in \cE_{n,2}$ that acts instead on the preceding $1$, i.e., the position $j$ in $E$ 
is replaced by ${j-1}$ in $E'$. Let $\z = \phi_{E'}(\x)$, and note that in $\x' \stackrel{E'}{\longrightarrow} \z'$,
the same segment $*1100*$ of $\x'$ now undergoes the change 
$*\Breve{1}100* \stackrel{E'}{\longrightarrow} *0000*$.
Thus, the number of $1$-runs in $\z'$ does not exceed the number, $m$, of $1$-runs in $\x'$; and moreover,
$\om(\z') = \om - 2$. Hence, $|\Phi_2(\z)| - |\Phi_2(\x)| \le \binom{\om-2}{2} - \binom{\om}{2} = 
-(2\om - 3) \le -1$, since we assumed $\om \ge 2$ at the outset. 

Thus, we have $|\Phi_2(\y)| = |\Phi_2(\x)| + 1$ and $|\Phi_2(\z)| \le |\Phi_2(\x)| - 1$. With this, we have
$\frac{1}{|\Phi_2(\y)|}  + \frac{1}{|\Phi_2(\z)|} \ge \frac{2}{|\Phi_2(\x)|}$. So, we can pair $\y \in F_2(\x)$ with $\z$. 

Now, suppose that $a = 2$, i.e., $\tm = m+2$. There are now exactly two segments of $\x'$ such that 
$*1\Breve{1}00* \stackrel{E}{\longrightarrow} *1010*$. Let $E' = E-1$ be obtained by replacing each $j \in E$
by $j-1$, and consider $\z = \phi_{E'}(\x)$. Once again, the number of $1$-runs in $\z'$ does not exceed $m$,
but now, we have $\om(\z') = \om-4$. This time, $|\Phi_2(\z)| - |\Phi_2(\x)| 
\le \binom{\om-4}{2} - \binom{\om}{2} = -(4\om - 10)$. Note that $\om$ must be at least $3$, since $\tm = m+2$
is not possible when $\om = 2$. Therefore, $|\Phi_2(\z)| - |\Phi_2(\x)| \le -4(\om-10) \le -2$. 
Thus, $|\Phi_2(\y)| = |\Phi_2(\x)| + 2$ and $|\Phi_2(\z)| \le |\Phi_2(\x)| - 2$. 
Hence, $\frac{1}{|\Phi_2(\y)|}  + \frac{1}{|\Phi_2(\z)|} \ge \frac{2}{|\Phi_2(\x)|}$, and we can pair $\y$ with this $\z$. 

By construction, the pairing $\y \mapsto \z$ is a one-to-one map.

\bigskip

\underline{Case II: $t=3$}.\ \  Consider any $\y \in F_3(\x)$. To the notation introduced above, we add
$m_1$ and $\tm_1$ to denote the number of $1$-runs of length $1$ in $\x'$ and $\y'$, respectively. As before,
$\tom \le \om$, but this time, $\tm \le m+3$ since $\x'$ can be affected by up to three grains. Also, a single grain 
can cause an increase of $2$ in $m_1$: $*01\Breve{1}110* \longrightarrow *010010*$ or 
$*01\Breve{1}00* \longrightarrow *01010*$. Hence, $\tm_1 \le m_1+6$.

Suppose first that $\tom \le \om-2$. Then, from Proposition~\ref{prop:small_t}(c), we have
$|\Phi_3(\y)| \le 1 + (m_1+6) + (m+3)(\om-2-3) + 2(\binom{\om-2}{3} + \binom{\om-2}{2} + 2(\om-2)$. 
Upon simplifying, we obtain $|\Phi_3(\y)| - |\Phi_3(\x)| \le -(\om^2-9\om+20)- 2m$. Since $m \ge 1$, 
we further obtain $|\Phi_3(\y)| - |\Phi_3(\x)| \le -(\om^2-9\om+22)$, which is a negative quantity. 
This cannot happen for $\y \in F_3(\x)$, so $\tom$ must equal $\om-1$ or $\om$. 

Since $\tom$ equals $\om -1$ or $\om$, we may assume that $\y = \phi_E(\x)$, 
for some grain pattern $E \in \cE_{n,3}$ that acts only upon the trailing $1$s in $\x'$. 
Let $E_i$, $i=1,2,3$ be the subset of $E$ consisting of grains $j$ that act on the trailing $1$s of $1$-runs 
of length $i$; also let $E_4 = E \setminus (E_1 \cup E_2 \cup E_3)$ be the subset of $E$ acting on the trailing $1$s 
of $1$-runs of length at least $4$. Set $e_i = |E_i|$, $i = 1,2,3,4$. It is easy to see that $\tm \le m + e_2+e_3+e_4$,
while $\tm_1 \le m_1 + 2e_2+e_3+e_4$. 

Let $d = e_2+e_3+e_4$. If $d=0$, then $\tm \le m$ and $\tm_1 \le m_1$. Since $\tom \le \om$ always, we have
$|\Phi_3(\y)| \le |\Phi_3(\x)|$ by Proposition~\ref{prop:small_t}(c), which is not possible for $\y \in F_3(\x)$. 

At this point, we have that $\tom$ equals $\om -1$ or $\om$, and $d$ equals 1, 2, or 3. 
We will now construct a $\z$ to be paired with $\y$. Let $E' = E_1 \cup \{j-1: j \in E_2 \cup E_3 \cup E_4\}$.
Thus, $E'$ is a grain pattern in $\cE_{n,3}$ that retains the grains from $E$ that act on trailing $1$s from 
$1$-runs of length $1$, but pushes back all the other grains in $E$ by one position. Let $\z = \phi_{E'}(\x)$.
We claim that the desired inequality $\frac{1}{|\Phi_3(\y)|}  + \frac{1}{|\Phi_3(\z)|} \ge \frac{2}{|\Phi_3(\x)|}$ holds. 
The remainder of this proof justifies this claim. 

It is enough to show that $\frac12 [|\Phi_3(\y)| + |\Phi_3(\z)|] \le |\Phi_3(\x)|$, since by the concavity of the function 
$f(x) = 1/x$, we would then have
$$
\frac12 \left[\frac{1}{|\Phi_3(\y)|}  + \frac{1}{|\Phi_3(\z)|}\right] \ge \frac{1}{\frac12 [|\Phi_3(\y)| + |\Phi_3(\z)|]}
\ge \frac{1}{|\Phi_3(\x)|}.
$$
To this end, note first that 
\beq
|\Phi_3(\y)| - |\Phi_3(\x)| \le 2e_2+e_3+e_4 + (\om-3)d.
\label{eq:diffyx}
\eeq
Next, we bound $|\Phi_3(\z)| - |\Phi_3(\x)|$. 
It is not difficult to check that $\om(\z') = \om - 2d$, the number of $1$-runs in $\z'$ is at most $m$, 
and at most $m_1+e_3$ of these are of length $1$. Thus, 
$|\Phi_3(\z)| \le 1 + (m_1+e_3) + m(\om-2d-3) +\binom{\om-2d}{3} - \binom{\om-2d}{2} + 2(\om-2d)$, 
and hence, 
\beq
|\Phi_3(\z)| - |\Phi_3(\x)| \le e_3 - 2md - \xi(\om,d),
\label{eq:diffzx}
\eeq
where $\xi(\om,d) = \binom{\om}{3} - \binom{\om}{2} + 2\om - \bigl[\binom{\om-2d}{3} - \binom{\om-2d}{2} + 2(\om-2d)\bigr]$. 
From (\ref{eq:diffyx}) and (\ref{eq:diffzx}), we obtain
\begin{eqnarray}
\frac12 [|\Phi_3(\y)| + |\Phi_3(\z)|] - |\Phi_3(\x)| 
&\le& e_2+e_3+\frac12 e_4 + \frac12 (\om-3)d - md - \frac12 \xi(\om,d) \notag \\
&\le& d + \frac12 (\om-3) d - d^2 - \frac12 \xi(\om,d), \label{eq:avg}
\end{eqnarray}
where we have used the fact that $m \ge d$, which is simply the observation that $|E_2 \cup E_3 \cup E_4|$ cannot
exceed the number of $1$-runs in $\x'$. 

If $d=1$, the expression in (\ref{eq:avg}) reduces to $-\frac12 (\om^2 - 7\om + 14)$, a negative quantity.
If $d=2$, we obtain $-(\om^2-9\om+24)$ instead, which is still a negative quantity. If $d=3$, we get
$-\frac32 (\om^2-11\om + \frac{110}{3})$, which is also a negative quantity. Thus, in all cases, we have
$\frac12 [|\Phi_3(\y)| + |\Phi_3(\z)|] \le |\Phi_3(\x)|$ as desired.  
\qed

\section*{Appendix B}
 \renewcommand{\thetheorem}{B.\arabic{theorem}} 
 \setcounter{theorem}{0}

We prove Theorem~\ref{thm:Rupbnd1} here. Throughout this appendix, we set $t = \lceil \tau n \rceil$ and $r = \lceil \rho n \rceil$ for some $\tau \in [0,\frac12]$ and $\rho \in [0,1]$.

The asymptotics of $\varphi_{n,t}(r)$ is determined by the largest term within the summation in \eqref{eq:varphi}. Letting $\zeta_j = \binom{r-j}{j}$, it is easy to verify that the ratio $\zeta_{j-1}/\zeta_j$ is at most $1$ when $j \le \frac{1}{10} (5r+7-\sqrt{5r^2+10r+9})$, and is strictly larger than $1$ for $\frac{1}{10} (5r+7-\sqrt{5r^2+10r+9}) < j \le r/2$; for $j > r/2$, we have $\zeta_j = 0$. Therefore, setting $J = \lfloor{\frac{1}{10} (5r+7-\sqrt{5r^2+10r+9})}\rfloor$, we see that if $t < J$, then the dominant term in \eqref{eq:varphi} is $\zeta_t$; and if $t \ge J$, the dominant term is $\zeta_J$. Passing to asymptotics, it follows that if we define $\alpha =\frac{5-\sqrt{5}}{10}$, then 
\beq
\lim_{n \to \infty} \frac{1}{n} \log_2 \varphi_{n,\lceil{\tau n \rceil}}(\lceil \rho n \rceil) = 
\begin{cases}
(\rho-\tau)\sfh\left(\frac{\tau}{\rho-\tau}\right) & \text{ if } \tau \le \alpha \rho \\
\rho (1-\alpha) \sfh(\frac{\alpha}{1-\alpha}) &  \text{ if } \tau \ge \alpha \rho
\end{cases}
\label{eq:lim}
\eeq

We record in the following lemma some facts about the constant $\alpha = \frac{5-\sqrt{5}}{10}$ that will be useful in the sequel. They are proved by straightforward algebraic manipulations. For ease of verification, we give a proof of part (c) at the end of this appendix.

\begin{lemma} Recall that $\phi = \frac{1+\sqrt{5}}{2}$ is the golden ratio. 
\begin{itemize}
\item[(a)] $\alpha^{-1} = \sqrt{5} \phi$ 
\item[(b)] $\frac{\alpha}{1-\alpha} = \frac{1}{1+\phi}$
\item[(c)] $(1-\alpha) \sfh(\frac{\alpha}{1-\alpha}) = \log_2\phi$.
\end{itemize}
\label{lem:alpha}
\end{lemma}

Resuming the proof of Theorem~\ref{thm:Rupbnd1}, from \eqref{eq:Mnt_upbnd3}, we obtain
$$
R(\tau) \le \max_{0 \le \rho \le 1} \left[\sfh(\rho) - \lim_{n\to\infty} \frac{1}{n} \varphi_{n,\lceil \tau n \rceil}(\lceil \rho n \rceil)\right].
$$
Hence, using \eqref{eq:lim} and Lemma~\ref{lem:alpha}, we have
\beq
R(\tau) \le \max\{A(\tau),B(\tau)\},
\label{eq:Rtau1}
\eeq
where 
\beq
A(\tau) = \max_{0 \le \rho \le \min\{\alpha^{-1}\tau,1\}} [\sfh(\rho) - \rho \log_2\phi],
\label{eq:A1}
\eeq
and 
\beq
B(\tau) = \max_{\alpha^{-1}\tau \le \rho \le 1} \left[\sfh(\rho) - (\rho-\tau)\sfh\left(\frac{\tau}{\rho-\tau}\right)\right]
\label{eq:B}
\eeq
For convenience, we define $B(\tau) = 0$ if $\alpha^{-1}\tau > 1$. Note that the term within square brackets in \eqref{eq:B} reduces to $\sfh(\alpha^{-1}\tau) - \alpha^{-1}\tau\log_2\phi$ if we set $\rho = \alpha^{-1}\tau$; therefore, $\sfh(\alpha^{-1}\tau) - \alpha^{-1}\tau \log_2\phi \le B(\tau)$. 

Now, using elementary calculus to solve the maximization problem in \eqref{eq:A1}, we obtain
$$
A(\tau) = \begin{cases}
\sfh(\alpha^{-1}\tau) - \alpha^{-1}\tau\log_2\phi & \text{ if } \alpha^{-1} \tau \le \frac{1}{1+\phi}\\
\sfh\left(\frac{1}{1+\phi}\right) - \frac{1}{1+\phi} \cdot \log_2\phi & \text{ if } \alpha^{-1} \tau \ge \frac{1}{1+\phi}
\end{cases}
$$
Somewhat miraculously, the expression $\sfh(\frac{1}{1+\phi}) - \frac{1}{1+\phi} \cdot \log_2\phi$ simplifies to $\log_2\phi$ using parts~(b) and (c) of Lemma~\ref{lem:alpha}: replace $\frac{1}{1+\phi}$ and $\log_2\phi$ by $\frac{\alpha}{1-\alpha}$ and $(1-\alpha) \sfh(\frac{\alpha}{1-\alpha})$, respectively, and simplify. Thus, we have
\beq
A(\tau) = \begin{cases}
\sfh(\alpha^{-1}\tau) - \alpha^{-1}\tau\log_2\phi & \text{ if } \alpha^{-1} \tau \le \frac{1}{1+\phi} \\
\log_2\phi & \text{ if } \alpha^{-1} \tau \ge \frac{1}{1+\phi}
\end{cases}
\label{eq:A2}
\eeq

As a result, when $\alpha^{-1} \tau \le \frac{1}{1+\phi}$, we have $A(\tau) = \sfh(\alpha^{-1}\tau) - \alpha^{-1}\tau \log_2\phi \le B(\tau)$. Thus, \eqref{eq:Rtau1} reduces to $R(\tau) \le B(\tau)$, which proves one half of Theorem~\ref{thm:Rupbnd1}.

To complete the proof of the theorem, we must show that when $\alpha^{-1} \tau \ge \frac{1}{1+\phi}$, we have $A(\tau) \ge B(\tau)$. This would then imply that $\max\{A(\tau),B(\tau)\} = A(\tau) = \log_2\phi$ by \eqref{eq:A2}. The above clearly holds when $\alpha^{-1}\tau > 1$, since $B(\tau) = 0$ in this case; so we henceforth assume $1 \ge \alpha^{-1} \tau \ge \frac{1}{1+\phi}$.

We will show that the maximum in the definition of $B(\tau)$ is achieved at $\rho = \alpha^{-1} \tau$. With this, $B(\tau) = \sfh(\alpha^{-1} \tau) - \alpha^{-1} \tau \log_2\phi  \le {\displaystyle \max_{0\le\rho\le \alpha^{-1} \tau}} [\sfh(\rho) - \rho \log_2\phi] = A(\tau)$.

Define $f_{\tau}(\rho) = \sfh(\rho) - (\rho-\tau)\sfh(\frac{\tau}{\rho-\tau})$, so that $B(\tau) = \max_{\alpha^{-1} \tau \le \rho \le 1} f_\tau(\rho)$. We want to show that, under the assumption $1 \ge \alpha^{-1} \tau \ge \frac{1}{1+\phi}$, the function $f_\tau(\rho)$ is monotonically decreasing in the range $\alpha^{-1} \tau \le \rho \le 1$. We accomplish this by showing that $f_\tau'(\alpha^{-1} \tau) \le 0$, and $f''_\tau(\rho) < 0$ for $\alpha^{-1} \tau \le \rho \le 1$. Here, all derivatives are with respect to the variable $\rho$.

\medskip

\underline{$f_\tau'(\alpha^{-1} \tau) \le 0$}: Computing the derivative $f_\tau'(\rho)$ by direct differentiation, then plugging in $\rho = \alpha^{-1}\tau$ and simplifying using Lemma~\ref{lem:alpha}, we obtain
 $$f_\tau'(\alpha^{-1} \tau) = \log_2\frac{1-\alpha^{-1} \tau}{\alpha^{-1} \tau} - \log_2\phi = g'(\alpha^{-1} \tau),$$
where $g$ is the function defined by $g(x) = \sfh(x) - (\log_2\phi) x$. Observe that $g(x)$ is strictly concave on $[0,1]$, and attains its unique maximum at $x = \frac{1}{1+\phi}$. Hence, for $x \ge \frac{1}{1+\phi}$, $g'(x) \le 0$. In particular, $g'(\alpha^{-1} \tau) \le 0$. 

\medskip

\underline{$f_\tau''(\rho) < 0$ for $\alpha^{-1} \tau \le \rho \le 1$}: Routine differentiation yields
$$
f_\tau''(\rho) = - \frac{1}{1-\rho} - \frac{1}{\rho} + \frac{\tau}{(\rho-\tau)(\rho-2\tau)}.
$$
For $\tau \le \alpha \rho$, we have
\begin{align*}
 \frac{\tau}{(\rho-\tau)(\rho-2\tau)} & \ \le \ \frac{\alpha\rho}{(\rho-\alpha\rho)(\rho-2\alpha\rho)} \\
& \ = \ \frac{\alpha}{(1-\alpha)(1-2\alpha)} \cdot \frac{1}{\rho}  \ = \ \frac{5-\sqrt{5}}{1+\sqrt{5}} \cdot \frac{1}{\rho} \ < \ \frac{1}{\rho}.
\end{align*}
Hence, $f_\tau''(\rho) <  - \frac{1}{1-\rho} < 0$.

\medskip

This completes the proof of Theorem~\ref{thm:Rupbnd1}, modulo the promised proof of Lemma~\ref{lem:alpha}(c).

\medskip

\emph{Proof of Lemma~\ref{lem:alpha}(c)\/}:  We first write
$$
(1-\alpha)\sfh\left(\frac{\alpha}{1-\alpha}\right) = 
  - \alpha \log_2 (\alpha(1-\alpha)) - (1-2\alpha) \log_2(1-2\alpha) + \log_2(1-\alpha).
$$
Using $\alpha(1-\alpha) = \frac{1}{5}$ and $1-2\alpha = \frac{1}{\sqrt{5}}$, the right-hand side above simplifies to
$$
\frac12 \log_2 5 + \log_2(1-\alpha) \ = \ \log_2[\sqrt{5}(1-\alpha)].
$$
It is easy to verify that $\sqrt{5}(1-\alpha) = \phi$. 
\qed

\section*{Acknowledgement} The authors thank Artyom Sharov and Ronny Roth for pointing out that the upper bound of Theorem~\ref{thm:Rupbnd2} could be partially improved by the approach of Section~\ref{sec:Rtau_upbnd1}.

\end{document}